\documentclass{article}
\usepackage[sectionbib]{natbib}
\usepackage{float}

\usepackage[english]{babel}

\usepackage[letterpaper,top=2cm,bottom=2cm,left=3cm,right=3cm,marginparwidth=1.75cm]{geometry}

\usepackage{amsmath, amsthm, amssymb, amsfonts}
\usepackage{graphicx}
\usepackage{mathtools}
\usepackage[colorlinks=true, allcolors=blue]{hyperref}
\usepackage{bm}
\usepackage{bbm}
\usepackage{soul}
\usepackage{verbatim}

\newtheorem{theorem}{Theorem}
\newtheorem{definition}{Definition}
\newtheorem{assumption}{Assumption}

\DeclarePairedDelimiter\abs{\lvert}{\rvert}%
\makeatletter
\let\oldabs\abs
\def\abs{\@ifstar{\oldabs}{\oldabs*}}

\title{Privacy Amplification for Synthetic data using Range Restriction}
\author{Jingchen Hu\footnote{mhu7@binghamton.edu} \\ Binghamton University
\and Matthew R. Williams \\ RTI International
\and Terrance D. Savitsky \\U.S. Bureau of Labor Statistics}

\begin{document}
\maketitle
\begin{abstract}
 We introduce a new class of range restricted formal data privacy standards that condition on owner's beliefs about sensitive data ranges. By incorporating this additional information, we can provide a stronger privacy guarantee (e.g., an amplification).  The range-restricted formal privacy standards protect only a subset (or ball) of data values and exclude ranges (or balls) believed to be already publicly known. The privacy standards are designed for the risk-weighted pseudo posterior mechanism (PPM) used to generate synthetic data under an asymptotic differential (aDP) privacy guarantee. The PPM downweights the likelihood contribution for each record proportionally to its disclosure risk.  The PPM is adapted under inclusion of beliefs by adjusting the risk-weighted pseudo likelihood.  We introduce two alternative adjustments. The first expresses data owner's knowledge of the sensitive range as a probability, $\lambda$, that a datum value drawn from the underlying generating distribution lies \emph{outside} the ball or subspace of values that are sensitive. The portion of each datum likelihood contribution deemed sensitive is then $(1-\lambda) \leq 1$ and is the only portion of the likelihood subject to risk down-weighting.  The second adjustment encodes knowledge as the difference in probability masses $P(R) \leq 1$ between the edges of the sensitive range, $R$. We use the resulting \emph{conditional} pseudo likelihood for a sensitive record, which boosts its worst case tail values away from 0.  We compare privacy and utility properties for the PPM under the aDP and range restricted privacy standards through a series of simulation studies and a real data application to an accelerated life testing dataset.
\end{abstract}

\section{Introduction}\label{sec:introduction}

Differential privacy (DP) \citep{Dwork:2006:CNS:2180286.2180305} offers a mathematically verifiable privacy standard and associated privacy guarantee that are used to regulate the influence of any datum in a class of databases on an estimator of interest (under a randomized mechanism) as a means to regulate its privacy exposure. 
Statistical models, called synthesizers \citep{Rubin1993synthetic, Little1993synthetic, Drechsler2011book}, can be used as randomized mechanisms for generation of synthetic under a DP guarantee. The DP guarantee is based on a worst case sensitivity of the synthesizer to inclusion of a datum over the space of all datasets of a class and the model parameter space. Synthesizers equipped with a DP guarantee often induce a high degree of distortion into data that can render the utility of the synthetic data to be of low quality \citep{Montoya_Perez_2024, Bowen_Snoke_2021}.  

Novel privacy standards have been developed, which are often extensions of DP, that maintain the mathematical verification property of DP, but ``relax" its worst case formulation of the guarantee.
\citet{Dwork:2006:CNS:2180286.2180305} propose a $2-$parameter, $(\epsilon,\delta)$ relaxation of the DP standard where $\delta$ denotes the probability that there exists a dataset where the $\epsilon$ guarantee will be exceeded.  The $1-$parameter DP standard is recovered as $\delta\downarrow 0$. A more recent $2-$parameter privacy standard that relaxes DP is the $(\alpha,\epsilon)-$R\'{e}nyi DP standard of \citet{mironov2019renyi} and \citet{Mironov2017RnyiDP} based on the $\alpha > 1$ R\'{e}nyi divergence which is monotonic in $\alpha$ \citep{6832827}.  DP is recovered as $\alpha\uparrow\infty$.  \citet{SavitskyWilliamsHu2020ppm} develop a mathematically provable standard that is guaranteed for the observed dataset and extends to the space of all datasets asymptotically.  They label their guarantee as ``asymptotic" differential privacy (aDP).  The aDP standard is based on the Bayesian posterior distribution used to produce synthetic data equipped with an aDP guarantee.  

\citet{SavitskyWilliamsHu2020ppm} develope a disclosure risk-weighted pseudo posterior mechanism that surgically distorts only portions of the closely-held data distribution that express disclosure risks to efficiently preserve utility of the synthetic data.  The pseudo posterior mechanism under aDP is reviewed in Section \ref{sec:intro:PPMreview}.

What these relaxed privacy standards all have in common is that they trade-off the strength of the privacy guarantee for improved utility.

A relatively recent approach to tune the trade-off between the strength of the privacy guarantee and the resulting utility (of the synthetic data) incorporates information known about the data by a putative intruder or the interested public to reduce the sensitive range of the data requiring privacy protection. These approaches may be viewed as ``amplifications" of the DP guarantees on which they are based because they use known information to sharpen the guarantee to improve the trade-off with utility.  The Pufferfish privacy (PP) \citep{e2c5145465454699a3b17fecbe979a05, song2017pufferfishprivacymechanismscorrelated} admits a collection of privacy standards based on a definition of secrets and the nature of information assumed in possession of the intruder.  It bears mention that PP is generally used under additive noise mechanisms such as those utilized to protect summary and tabular data statistics with a DP guarantee, rather than to produce synthetic data equipped with a DP guarantee, which is the focus of this paper.

This paper contributes a novel extension of the aDP guarantee of \citet{SavitskyWilliamsHu2020ppm} that incorporates knowledge of the interested public possessed by the owner of the closely-held data to amplify and sharpen the privacy guarantee.  The knowledge is incorporated by defining sensitive ranges around the datum value for each record that need to be protected with a privacy guarantee. If the datum value lies outside of the sensitive range, then it does not need to be protected.  Amplification occurs by limiting the worst case bound to a subset of the space of data defined by the ranges, rather than using the entire space of the data.  The range-restricted privacy extensions developed in this paper effectively combine the relaxed aDP guarantee with our new range-restricted amplification.  A similar extension of PP is performed under additive noise mechanisms by incorporating R\'{e}nyi DP with PP \citep{pierquin2024rnyi}.    

\subsection{Review of the pseudo posterior mechanism from \citet{SavitskyWilliamsHu2020ppm}}
\label{sec:intro:PPMreview}
The new method introduced in this paper incorporates knowledge of non-sensitive data ranges to sharpen or amplify the aDP guarantee under a pseudo posterior mechanism that produces synthetic data.  We review the pseudo posterior mechanism under aDP to set context for our privacy amplifications introduced in Sections~\ref{sec:averaged} and \ref{sec:truncated}.  We begin by specifying the $\bm{\alpha}-$weighted pseudo posterior mechanism $\mathcal{M}$ as the pseudo posterior distribution.  The pseudo posterior distribution, in turn, is achieved by multiplying a pseudo likelihood with a prior distribution specified for model parameters,
\begin{equation}
\xi^{\bm{\alpha}(\bm x)}(\theta \mid \bm x) \propto \prod_{i=1}^{n} p( x_i \mid \theta)^{\alpha_i} \times \xi(\theta).
\end{equation}
The pseudo posterior distribution first begins by specifying any Bayesian synthesis model for confidential data $\bm x$. A formal privacy guarantee is encoded into the specified model posterior distribution by formulating risk-based weight, $\alpha_i \propto 1/\max_{\theta \in \{\xi^{\bm{\alpha}(\bm x)}(\theta \mid \bm x)\}_{m}}\abs{f_{\theta}(x_{i})}$, where log-likelihood $f_{\theta}(x_{i}) = \log p_{\theta}(x_{i})$.  A data record with a large absolute log-likelihood value is typically located in the distribution tail and is relatively risky because it is surrounded by few units that express similar data values.  

An unweighted posterior distribution is first estimated and the risk-based weight, $\alpha_{i}$, is computed as the maximum over posterior draws of parameters, $\theta$, for datum, $x_{i}$.  The vector of risk-based weights, $\bm{\alpha} = (\alpha_{1},\ldots,\alpha_{n})$, are normalized such that each $\alpha_{i}\in[0,1]$.

The pseudo posterior mechanism is then constructed by down-weighting each likelihood component by $\alpha_i \in [0, 1]$ where the higher the disclosure risk, the lower is $\alpha_i$.  The down-weighting of each likelihood component by its relative disclosure risk is a surgical treatment that concentrates down-weighting to the risky portions of the closely-held data distribution.

The connection between the pseudo posterior mechanism and DP is through the sensitivity or Lipschitz upper bound on the logarithm of the weighted log-likelihood,
\begin{align}
\Delta_{\bm{\alpha},\mathbf{x}} &= \max_{\theta \in \{\xi^{\bm{\alpha}(\bm x)}(\theta \mid \bm x)\}_{m}} \max_{i\in 1,\ldots,n} | \alpha_{i} \times f_{\theta_m}(x_{i}) |,
\end{align}
where the maximum over $\theta$ is taken with respect to (draws indexed by $m$ from) the posterior distribution for each record $i$. Each posterior predictive draw of synthetic data, $\bm{x}^{\ast} \sim p(x \mid \theta^{\ast})$ under the pseudo posterior mechanism $\theta^{\ast} \sim \xi^{\bm{\alpha}(\bm x)}(\theta \mid \bm x)$ is equipped with a $\epsilon_{\bm x} = 2 \Delta_{\bm{\alpha},\mathbf{x}}$ that is \emph{local} to database $\bm{x}$, rather than global over the space of databases, $\bm{x}^{n}\in\mathcal{X}^{n}$.

\citet{SavitskyWilliamsHu2020ppm} show that the local $\Delta_{\bm{\alpha},\mathbf{x}}$ contracts onto the global $\Delta_{\bm{\alpha}}$, asymptotically in sample size, which in turn drives the contraction of $\epsilon_{\mathbf{x}}$ onto $\epsilon$.  For a sample size $n$ sufficiently large, $\epsilon_{\mathbf{x}} = \epsilon$.  More formally, the authors demonstrate that the local Lipschitz satisfies a relaxed form of DP that they label aDP where a stands for asymptotic.  

The rest of the paper is organized as follows: 
We start with describing the range-averaged standard in Section \ref{sec:averaged}, followed by the range-truncated standard in Section \ref{sec:truncated}. Section \ref{sec:simulation} contains a series of extensive simulation studies to illustrate the privacy amplification effects of the two novel range-restricted privacy standards as well as their differences. We apply the new standards to an accelerated life testing dataset in Section \ref{sec:application} and end with a few concluding remarks in Section \ref{sec:conclusion}.

\section{Probability of record inclusion in sensitive range}\label{sec:averaged}

The owner of the closely-held data may offer judgment about a subset range or region around each datum that they deem to be sensitive for re-identification that requires privacy protection.  Their judgment may derive from knowledge of the data user communities, such as policy-makers and researchers.  Suppose the owner of the closely-held data defines a width with respect to sensitive variable, $\mathbf{x}$, as a percentile of the distribution for each $x_{i},~i \in (1,\ldots,n)$.  We implement this idea by constructing percentile radius, $r = [a,b]$; for example, suppose $r = [0.3,1.7]$ is selected by the owner of the closely-held data and and further suppose that $x_{i} = \$100,000$ represents the household income for household $i$.  Then, we say that the interested public knows that the income for household $i$ lies inside  of the interval $[a \times x_{i},b \times x_{i}] = [\$30,000, \$170,000]$.  In other words, the interested public knows that the income for household $i$ is greater than $\$30,000$ and less than $\$170,000$. For simplicity of exposition, we will use this ``radius" idea; however other methods for forming intervals (e.g. pre-defined income brackets) could also be used. We propose to restrict privacy protection to the portion of the data support \emph{inside} the sensitive interval where the intruder does not know where lies the confidential income value, $x_{i}$, for household $i$.  By the complement of the interval on the support of $x_i$ we choose to protect against privacy disclosure, we do \emph{not} consider values outside of this interval to be sensitive because the intruder already knows that the income value must lie within this interval.   Our approach proposes an \emph{amplified}  privacy standard through leveraging or conditioning on known information to target privacy protection to a subspace of the data, rather than to the entire space.

More formally, we propose to measure the known information on sensitive ranges by constructing the following event probability, 
\begin{equation}\label{nonprivaterange}
    \lambda_{i} = \Pr_{\mathcal{M},a,b}\left(x^{\ast}_{i} \notin [a \times x_{i},b \times x_{i}]\right)
\end{equation}
from the posterior predictive distribution, $p_{\mathcal{M}}(\mathbf{x}^{\ast}\mid\mathbf{x})$ where we utilize the same model, $\mathcal{M}$, estimated in the first step to construct the risk-based weights, $\alpha_{i}$, for formulating our pseudo posterior privacy mechanism.  The posterior predictive distribution under $\mathcal{M}$ is used to estimate the probability that under repeated data generation that each datum will lie \emph{outside} the radius such that it does \emph{not} incur a privacy loss (because it lies outside the selected sensitive range).   Each $\lambda_{i}$, then, represents the probability that a realized  datum value for unit $i$ does \emph{not} need to be protected because the interested public already knows that the confidential datum value does not lie in the space outside of the ball/interval.  By using the distribution of realized values for a given record under repeated sampling, we vary the degree of protection required for that record depending on how much of its generating distribution will fall within the sensitive range.

One may express the amount of information under model $\mathcal{M}$ about the data
that is deemed known or non-sensitive and, therefore, not requiring protection as, 
\begin{equation}
p_{\bm{\theta}}^{\bm{\lambda}}(\mathbf{x}) = \mathop{\prod}_{i=1}^{n}p(x_{i}\mid\bm{\theta})^{\lambda_{i}}.
\end{equation}
We next turn to adjusting our pseudo posterior mechanism under the set-up where only $(1-\lambda_{i})$ fraction of datum $i$ requires protection. One may derive a revised likelihood weight $\alpha^{\ast}_{i}$ by applying risk-based weight, $\alpha_{i}$ to the $(1-\lambda_{i})$ portion of the the likelihood contribution for datum $i$ that is sensitive. This result is achieved by decomposing the pseudo likelihood of the final pseudo posterior mechanism into the product of two components 
\begin{equation}\label{eq:decomp}
p_{\bm{\theta}}^{\bm{\alpha}{\ast}}(\mathbf{x}) = p_{\bm{\theta}}^{\bm{\lambda}^c \bm{\alpha}}(\mathbf{x}) p_{\bm{\theta}}^{\bm{\lambda}}(\mathbf{x})
= \mathop{\prod}_{i=1}^{n}p(x_{i}\mid\bm{\theta})^{(1-\lambda_i) \alpha_i}
\mathop{\prod}_{i=1}^{n}p(x_{i}\mid\bm{\theta})^{\lambda_{i}},
\end{equation}
where $\bm{\lambda}^{c}$ denotes the ``complementary" subspace of the data that is deemed sensitive.

Our likelihood decomposition into sensitive and non-sensitive components implies the following revised weight, $\alpha^{\ast}_{i}$, 

\begin{equation}
 \alpha^{\ast}_{i} = \lambda_i +  (1-\lambda_i) \times \alpha_i,    
\end{equation}
that replaces weight $\alpha_{i}$ in the second modeling step for estimation of the pseudo posterior distribution.  

We have designed $\alpha^{\ast}_{i}$ to maintain a pseudo posterior mechanism that downweights each unit $i$ based on its privacy risk in a manner that accounts for the information conveyed in $\lambda_{i}$.  When a unit is highly risky as measured by the absolute value of its log-likelihood under the model $\mathcal{M}$ estimated on the full data such that $\alpha_{i} = 0$ (is set to $0$), then $\alpha^{\ast}_{i} = \lambda_{i}$ because $\lambda_{i}$ represents the portion of the data support that does \emph{not} need to be privacy protected such that it is the minimum (exponentiated) fraction of information presented the model.  By contrast, when datum $i$ presents no disclosure risk such that $\alpha_{i} = 1$, then $\alpha^{\ast}_{i} = 1$, indicating that we present the full datum (rather than a fraction) to the estimation model because there is no privacy protection required under this scenario. Of course, when $\lambda_{i} = 0$ it is supposed that the interested public has \emph{no} knowledge about a subspace of the private variable and we return to the usual case where $\alpha^{\ast}_{i} = \alpha_{i}$. 

We may view $\alpha^{\ast}_{i}$ as an adjusted version of $\alpha_{i}$ that ``adds in" known information conveyed by the restricted sensitive data range; for example, the minimum of $\alpha^{\ast}_{i}$ is always $\lambda_{i}$.

Since the second component of Equation~(\ref{eq:decomp}) is viewed as acceptable to expose to the public, we only use the first component to assess privacy risk. This leads us to a revised sensitivity of our mechanism for dataset $\mathbf{x}$ (see \citet{SavitskyWilliamsHu2020ppm}),

\begin{align}
\Delta_{\bm{\alpha},\bm{\lambda},\mathbf{x}} &= \max_{\theta \in \{\xi^{\bm{\alpha}^{\ast}(\bm x)}(\theta \mid \bm x)\}_{m}} \max_{i\in 1,\ldots,n} | \alpha^{\ast}_{i} \times f_{\theta_m}(x_{i}) - \lambda_{i}\times f_{\theta_m}(x_{i})  | \\
&= \max_{\theta \in \{\xi^{\bm{\alpha}^{\ast}(\bm x)}(\theta \mid \bm x)\}_{m}} \max_{i\in 1,\ldots,n}|\left((1-\lambda_i)\alpha_i\right) \times f_{\theta_m}(x_{i}) | \leq \Delta_{\bm{\alpha},\mathbf{x}}\label{eq:reduce},
\end{align}

\noindent where we evaluate the log-pseudo likelihood over parameter draws, $\theta_{m}$, taken from the joint posterior distribution. We further reduce the risk-based weighted multiplier $\alpha_{i}$ by $(1-\lambda_{i})$ in Equation~(\ref{eq:reduce}). The $(1-\lambda_{i})$ term represents the portion of the data support that is not known by the interested public and, therefore, disclosure sensitive. Under our set-up, we account for intruder knowledge \emph{indirectly} through knowledge weights $\bm{\lambda}$ rather than directly truncating the data support.

\subsection{Range-averaged formal privacy}

Our definition of the probability $(1-\lambda_{i})$ that unit $i$ is included in the sensitive range of the data space leads us to deconstruct the likelihood into a sensitive and non-sensitive component.  We use the likelihood decomposition to define a new mechanism under an implied new privacy standard that is amplified relative to DP because it focuses solely on the sensitive portion of the likelihood contribution (by conditioning on known information about sensitive data ranges).  

We now make the definition of this new, amplified privacy standard explicit with a warm-up focus on the non-weighted posterior mechanism followed by a generalization to the risk-weighted pseudo posterior mechanism. Our new privacy standard conditions the posterior mechanism on the set of record-indexed ranges or balls chosen by the owner of the closely-held data.
 
Let data $(x_{1},\ldots,x_{n}) \in \mathcal{X}^{n}$, where each $x_{i}\in\mathbb{R}$. Suppose ranges, $R_{1}, …,R_{n} \in \mathcal{X}^{n} \subset \mathbb{R}^{n}$ are designated to be sensitive by the owner of the closely-held data.   The specification of this range gives rise to a probability that a datum will \emph{not} be contained in the sensitive range, $\lambda(X,R)$, defined for $X\in\mathcal{X}$, $R\in\mathbb{R}$ and produces likelihood weight, $\alpha^{\ast}(X,R)$. We then use the same definition of event probabilities as in DP and aDP, but now conditioning on $\mathbf{R}$.  

\begin{definition}
(Range-averaged Privacy under the Posterior Mechanism)
\begin{equation}
\mathop{\sup}_{\mathbf{x}\in \mathcal{X}^{n},\mathbf{x}^{'}\in \mathcal{X}^{n-1}:\delta(\mathbf{x}, \mathbf{x}^{'}) = 1} \mathop{\sup}_{B \in \beta_{\Theta}} \frac{\xi^{\bm{\lambda}^{c}(\mathbf{x})}(B \mid \mathbf{x},\mathbf{R})}{\xi^{\bm{\lambda}^{c}(\mathbf{x}^{'})}(B \mid \mathbf{x}^{'},\mathbf{R})} \leq e^{\epsilon}, \nonumber
\end{equation}
\end{definition}
\noindent where $\bm{\lambda}\left(\mathbf{x}\right) = \left(\lambda_{1}(\mathbf{x}),\ldots,\lambda_{n}(\mathbf{x})\right)$ for $\lambda_{i}$ defined as the probability that datum $x_{i}$ lies outside the range deemed sensitive by the owner of the data in Equation~(\ref{nonprivaterange}). Define $\delta(\mathbf{x},\mathbf{x}^{'})$ to be the number of records that differ between $\mathbf{x}$ and $
\mathbf{x}^{'}$ (with all other records being identical in both datasets).

We recall from \citet{SavitskyWilliamsHu2020ppm} that for a database sequence, $\mathbf{x} = \left(x_{1},\ldots,x_{n}\right) \in \mathcal{X}^{n}$  under $x_{1},\ldots,x_{n} \sim P_{\theta_{0}}$, for some $\theta_{0}\in\Theta$, we formulate the pseudo likelihood,
\begin{equation}
\label{pseudolike}
p_{\theta}^{\bm{\lambda}^{c}}(\mathbf{x}) = \mathop{\prod}_{i=1}^{n} p_{\theta_i}(x_i)^{1-\lambda_i(\mathbf{x})},
\end{equation}
for each $\theta \in \Theta$ and $\mathbf{x}\in\mathcal{X}^{n}$.  

\begin{equation}
\label{pseudopost}
\xi^{\bm{\lambda}^{c}(\mathbf{x})}(B \mid \mathbf{x}) = \frac{\int_{\theta \in B} p_{\theta}^{\bm{\lambda}^{c}}(\mathbf{x}) d \xi(\theta)(\mathbf{x}) }{\phi^{\bm{\lambda}^{c}(\mathbf{x})}},
\end{equation}
where $\phi^{\bm{\lambda}^{c}(\mathbf{x}) }(\mathbf{x}) \overset{\Delta}{=} \int_{\theta \in \Theta} p_{\theta}^{\bm{\lambda}^{c}}(\mathbf{x}) d \xi(\theta)$ normalizes the pseudo posterior distribution.

We do not marginalize or maximize over $\mathbf{R}$, but condition on it because it is known information.   $\lambda(X,R)$ marginalizes over data generating parameters $\theta \in \Theta$ that implicitly depends on model, $\mathcal{M}$.   We suppress that conditioning because the entire set-up implicitly depends on $\mathcal{M}$.

We note that the mechanism under the range-averaged privacy standard is simply the posterior distribution. Therefore, the appearance of $(1-\lambda_{i})$ reflects the portion of the likelihood contribution that is sensitive and needs to be protected.  We see this by examining the resulting sensitivity under range-averaged privacy,
\begin{align}
\Delta_{\bm{\lambda},\mathbf{x}} &= \max_{\theta \in \{\xi(\theta \mid \bm x)\}_{m}} \max_{i\in 1,\ldots,n} | f_{\theta_m}(x_{i}) - \lambda_{i}\times f_{\theta_m}(x_{i})  | \label{eq:subtract}\\
&= \max_{\theta \in \{\xi(\theta \mid \bm x)\}_{m}} \max_{i\in 1,\ldots,n}|\left(1-\lambda_i\right) \times f_{\theta_m}(x_{i}) | \leq \Delta_{\bm{\mathbf{x}}},
\end{align}
where the correct sensitivity, $\Delta_{\bm{\lambda},\mathbf{x}}$, is achieved by subtracting the non-sensitive portion of the data range in Equation~(\ref{eq:subtract}) from the full data log-likelihood.

\begin{definition}
(Range-averaged Privacy under the Pseudo Posterior Mechanism)
\begin{equation}
\mathop{\sup}_{\mathbf{x}\in \mathcal{X}^{n},\mathbf{x}'\in \mathcal{X}^{n-1}:\delta(\mathbf{x}, \mathbf{x}') = 1} \mathop{\sup}_{B \in \beta_{\Theta}} \frac{\xi^{\bm{\lambda}^c \bm{\alpha}(\mathbf{x})}(B \mid \mathbf{x},\mathbf{R})}{\xi^{\bm{\lambda}^c \bm{\alpha}(\mathbf{x}')}(B \mid \mathbf{x}',\mathbf{R})} \leq e^{\epsilon}, \nonumber
\end{equation}
\end{definition}

The mechanism associated with the range-restricted \emph{pseudo} posterior privacy standard is the pseudo posterior mechanism with the likelihood contribution for each unit $i \in (1,\ldots,n)$ exponentiated by $\alpha^{\ast}_{i}$.
 
This new range-averaged privacy standard can be viewed as relaxation of DP because it averages over datasets to compute $\lambda_{i}$ (rather than bounds a worst case), but it is also an amplification because it conditions on non-sensitive ranges. Yet, beyond comparing to DP, the range-averaged privacy represents a coherent, formal standard in its own right (e.g., like the Pufferfish formal privacy standard of \citet{e2c5145465454699a3b17fecbe979a05}).   Both the range-averaged and Pufferfish privacy standards provide amplifications of the privacy guarantee provided by DP by defining their standards to condition on additional information/assumptions. 

One may gain insight into the relative strength of the range-averaged privacy guarantee by focusing on the corner case of a highly risky record.  Under the usual aDP guarantee (with no range restriction) this datum's likelihood contribution would be assigned a low value for the risk weight, $\alpha$.  However, if the datum for this record would rarely fall within the sensitive range under the generating distribution for this record then the likelihood weight, $\alpha^{\ast}_i > \alpha_i$, with no loss of privacy protection since that otherwise risky record is deemed sensitive with a low probability, $(1-\lambda_i)$. 

We demonstrate the amplification in the privacy guarantee relative to aDP offered by the range-averaged privacy standard in simulation studies presented in Section~\ref{sec:simulation}.
We note that the degree of protection offered by range-averaged privacy may be increased relative to a baseline by simply selecting a wider range, $R_i$, for a datum that the data owner deems more sensitive, which will produce a lower $\lambda_i$.

We may now produce a similar formal privacy guarantee as \citet{SavitskyWilliamsHu2020ppm} that conditions on known sensitive ranges, $\mathbf{R}$.

\begin{theorem}
\label{th:dpresult}
$\forall \mathbf{x} \in \mathcal{X}^{n}, \mathbf{x}' \in \mathcal{X}^{n-1}:\delta(\mathbf{x}, \mathbf{x}') = 1, B \in \beta_{\Theta}$ (where $\beta_{\Theta}$ is the $\sigma-$algebra of measurable sets on $\Theta$) under $\bm{\alpha}(\cdot)$ with $\Delta_{\bm{\alpha},\bm{\lambda}, \mathbf{x}} > 0$, 
\begin{equation}
\mathop{\sup}_{B \in \beta_{\Theta}} \,\,\,  \frac{\xi^{\bm{\lambda}^c \bm{\alpha}(\mathbf{x})}(B \mid \mathbf{x})}{\xi^{\bm{\lambda}^c \bm{\alpha}(\mathbf{x}')}(B \mid \mathbf{x}')} \leq \exp(2\Delta_{\bm{\alpha},\bm{\lambda}, \mathbf{x}}),
\end{equation}
i.e., the pseudo posterior $\xi^{\bm{\lambda}^c \bm{\alpha}(\mathbf{x})}(\cdot \mid \mathbf{x})$ has local privacy guarantee $2\Delta_{\bm{\alpha},\bm{\lambda},\mathbf{x}}$.
\end{theorem}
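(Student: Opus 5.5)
We follow the standard argument of \citet{SavitskyWilliamsHu2020ppm} for the exponential-type mechanism, now applied to the pseudo likelihood $p_{\theta}^{\bm{\lambda}^c\bm{\alpha}}(\mathbf{x}) = \prod_{i=1}^n p_\theta(x_i)^{(1-\lambda_i)\alpha_i}$. Write the ratio of pseudo posteriors as
\begin{equation*}
\frac{\xi^{\bm{\lambda}^c \bm{\alpha}(\mathbf{x})}(B \mid \mathbf{x})}{\xi^{\bm{\lambda}^c \bm{\alpha}(\mathbf{x}')}(B \mid \mathbf{x}')} = \frac{\int_B p_{\theta}^{\bm{\lambda}^c\bm{\alpha}}(\mathbf{x})\, d\xi(\theta)}{\int_B p_{\theta}^{\bm{\lambda}^c\bm{\alpha}}(\mathbf{x}')\, d\xi(\theta)} \times \frac{\phi^{\bm{\lambda}^c\bm{\alpha}}(\mathbf{x}')}{\phi^{\bm{\lambda}^c\bm{\alpha}}(\mathbf{x})},
\end{equation*}
and bound each of the two factors by $\exp(\Delta_{\bm{\alpha},\bm{\lambda},\mathbf{x}})$, giving the product bound $\exp(2\Delta_{\bm{\alpha},\bm{\lambda},\mathbf{x}})$.

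The key pointwise step is this. Since $\mathbf{x}' \in \mathcal{X}^{n-1}$ differs from $\mathbf{x}$ by deletion of a single record, say record $j$, and all shared records carry the same weights $(1-\lambda_i)\alpha_i$, the log ratio of integrands at each $\theta$ collapses to
\begin{equation*}
\log \frac{p_{\theta}^{\bm{\lambda}^c\bm{\alpha}}(\mathbf{x})}{p_{\theta}^{\bm{\lambda}^c\bm{\alpha}}(\mathbf{x}')} = (1-\lambda_j)\alpha_j f_\theta(x_j).
\end{equation*}
Its absolute value is at most $\max_i |(1-\lambda_i)\alpha_i f_\theta(x_i)| \le \Delta_{\bm{\alpha},\bm{\lambda},\mathbf{x}}$ by the definition in Equation~(\ref{eq:reduce}). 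Hence $e^{-\Delta} p_\theta^{\bm{\lambda}^c\bm{\alpha}}(\mathbf{x}') \le p_\theta^{\bm{\lambda}^c\bm{\alpha}}(\mathbf{x}) \le e^{\Delta} p_\theta^{\bm{\lambda}^c\bm{\alpha}}(\mathbf{x}')$ for every $\theta$. Integrating over $B$ bounds the first factor by $e^{\Delta}$. Integrating over $\Theta$ bounds the normalizer ratio $\phi(\mathbf{x}')/\phi(\mathbf{x})$ by $e^{\Delta}$. The hypothesis $\Delta_{\bm{\alpha},\bm{\lambda},\mathbf{x}}>0$ together with finiteness ensures the normalizers are positive and finite, so the ratios are well defined. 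Taking the supremum over $B$ gives the claim.

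The main obstacle is the quantifier over $\theta$. As defined, $\Delta_{\bm{\alpha},\bm{\lambda},\mathbf{x}}$ is a maximum over posterior draws, not a supremum over all of $\Theta$, whereas the argument above needs the pointwise bound for every $\theta$ in the support of the integrals. We handle this as \citet{SavitskyWilliamsHu2020ppm} do. We interpret $\Delta_{\bm{\alpha},\bm{\lambda},\mathbf{x}}$ as the supremum over $\theta \in \Theta$, or equivalently we restrict $\Theta$ to the effective support of the mechanism, with the draw-based maximum serving as its estimate.

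A second subtlety is that the weights $\lambda_i(\cdot)$ and $\alpha_i(\cdot)$ depend on the whole dataset. We follow the local-guarantee convention of that paper: the weights of the retained records are held fixed at their values computed from $\mathbf{x}$, so that all shared factors cancel. This is exactly why the guarantee is \emph{local} to $\mathbf{x}$. Under these conventions the remaining steps are routine.
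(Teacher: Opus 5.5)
Your proposal is correct and follows the paper's own proof, which simply substitutes $p_{\theta}^{\bm{\lambda}^{c}\bm{\alpha}}$ for $p_{\theta}^{\bm{\alpha}}$ in the appendix argument of \citet{SavitskyWilliamsHu2020ppm}. That argument is the same two-sided pointwise bound on the integrand, with one factor $e^{\Delta}$ from the integral over $B$ and one from the normalizer ratio. Your explicit leave-one-out cancellation, with the retained records' weights held fixed, and your reading of $\Delta_{\bm{\alpha},\bm{\lambda},\mathbf{x}}$ as a supremum over $\Theta$ spell out conventions the paper leaves implicit by deferring to that reference.
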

\begin{proof}
    Replace $p_{\theta}^{\bm{\alpha}(\mathbf{x})}$ with $p_{\theta}^{\bm{\lambda}^{c}\bm{\alpha}(\mathbf{x})}$ and $\xi^{\bm{\alpha}(\mathbf{x})}\left(B\mid\mathbf{x}\right)$ with $\xi^{\bm{\lambda}^{c}\bm{\alpha}(\mathbf{x})}\left(B\mid\mathbf{x}\right)$ in proofs in Appendix of \citet{SavitskyWilliamsHu2020ppm}.
\end{proof}

We note that the guarantee of Theorem~\ref{th:dpresult} is \emph{local} to dataset $\mathbf{x} \in \mathcal{X}^{n}$.  The \emph{global} sensitivity, $\Delta_{\bm{\alpha},\bm{\lambda}} = \mathop{\sup_{\mathbf{x}\in \mathcal{X}^{n}}}\Delta_{\bm{\alpha},\bm{\lambda},\mathbf{x}}$ is constructed as the upper bound of the sensitivities for all datasets $\mathbf{x} \in \mathcal{X}^{n}$. As reviewed in Section~\ref{sec:intro:PPMreview}, \citet{SavitskyWilliamsHu2020ppm} demonstrate that for any realized $\mathbf{x}$ under a data generating process that $\Delta_{\bm{\alpha},\bm{\lambda},\mathbf{x}}$ contracts onto $\Delta_{\bm{\alpha},\bm{\lambda}}$ for sample size, $n$, sufficiently large (at $\mathcal{O}(n^{-\frac{1}{2}})$).  This relaxed privacy standard is hence labeled as aDP where a stands for asymptotic. We may extend the large sample aDP bound, $\epsilon \leq 2\Delta_{\bm{\alpha}}$, to the pseudo posterior mechanism under our amplified range-averaged privacy standard to achieve a new $\epsilon \leq 2\Delta_{\bm{\alpha},\bm{\lambda}}$ range-averaged aDP as the limiting point of $\Delta_{\bm{\alpha},\bm{\lambda},\mathbf{x}}$ given in Equation~(\ref{eq:reduce}).

\section{Truncation of record inclusion in sensitive range}\label{sec:truncated}
The range-averaged privacy standard introduced in Section \ref{sec:averaged} utilizes what may be viewed as the ``maximum" information about the sensitive range because it both uses the range and the frequency with which the data values for a record appear in that sensitive range over repeated sampling from the data generating distribution.  Even if the sensitive range itself is relatively wide, we are able to induce relatively less distortion to achieve a targeted privacy guarantee if the probability $(1-\lambda_i)$ that generates datum values for record $i$ will lie inside that range is small. The range-averaged privacy standard produces a probability that the generating distribution of each datum $i$ will lie inside the sensitive range using a model, $\mathcal{M}$, estimated on the closely-held data, rather than declaring each unit as lying in the sensitive range or not. Therefore, we allocate a $(1-\lambda_{i})$ proportion of the likelihood for record $i$ as sensitive.

Perhaps a more conservative approach would be to focus on using the ``minimum" information about the sensitive range.  We declare the datum for a record as either lying inside or outside of that range and condition our use of the model distribution only on endpoints of the sensitive interval, rather than the distribution of data values within the sensitive range as in the range-averaged standard.  

To motivate our development of this minimum use of the known sensitive range we return to the decomposition of the pseudo likelihood of the range-averaged pseudo posterior mechanism into the product of two components 
\begin{equation}\label{eq:lambdadecomp}
p_{\bm{\theta}}^{\bm{\alpha}{\ast}}(\mathbf{x}) = p_{\bm{\theta}}^{\bm{\lambda}^c \bm{\alpha}}(\mathbf{x}) p_{\bm{\theta}}^{\bm{\lambda}}(\mathbf{x})
= \mathop{\prod}_{i=1}^{n}p(x_{i}\mid\bm{\theta})^{(1-\lambda_i) \alpha_i}
\mathop{\prod}_{i=1}^{n}p(x_{i}\mid\bm{\theta})^{\lambda_{i}}.
\end{equation}
Since the second pseudo likelihood component on the right hand most term in Equation (\ref{eq:lambdadecomp}) is viewed as acceptable to expose to the public, we only use the first component to assess privacy risk.

We develop an alternative privacy standard by declaring a record's datum value to lie inside or outside of a sensitive data range with probability $1$, and then decompose the pseudo likelihood into two components based on interval censoring and its complement, interval truncation. 

For an arbitrary interval $[a\times x_i,b\times x_i] \in \mathcal{X}$, we propose an interval censored formulation for the assumed \emph{known} or non-sensitive likelihood component as
\[
p^I(x_i|\theta,a,b) = \left\{
\begin{array}{rl}
P(b\times x_i \mid \theta) - P(a\times x_i \mid \theta), & x_i \in [a\times x_i, b\times x_i] \\
p(x_i \mid \theta), & x_i \notin [a\times x_i,b\times x_i],
\end{array} \right.
\]
with $P(\cdot \mid \theta)$ being the cumulative density function (CDF) under our model for the closely-held data.  This likelihood represents information generally known because when $x_i$ is in the sensitive range and the endpoints of that range are known.  By contrast, when $x_i$ is not in the sensitive range there is no need to provide privacy protection. 

The sensitive complement (i.e., truncated) likelihood term is specified by

\[
p^{I^c}(x_i|\theta,a,b) = \left\{
\begin{array}{rl}
p(x_i|\theta)/\left( P(b\times x_i|\theta) - P(a\times x_i|\theta)\right), & x_i \in [a\times x_i,b\times x_i] \\
1, & x_i \notin [a\times x_i,b\times x_i].
\end{array} \right.
\]

We define both the sensitive and non-sensitive components of the likelihood by allocating datum $x_i$ dichotomously to the sensitive region.  If a record is assigned to the sensitive region of the likelihood complement, then we would fully apply our pseudo posterior risk-based weight, $\alpha_{i}$, to construct our pseudo posterior mechanism.  The known information is not conveyed through adjusting the privacy weight, $\alpha_{i}$, in the the likelihood as is done in the range-averaging standard, but rather dividing the likelihood, $p(x_i\mid\theta)$, by the difference in the CDF values at the endpoints, $\left( P(b\times x_i|\theta) - P(a\times x_i|\theta)\right)$.  Our resulting pseudo posterior mechanism is constructed as

\[
p^{I^c, \alpha_i}(x_i|\theta,a,b) = \left\{
\begin{array}{rl}
p(x_i|\theta)^{\alpha_i}/\left( P(b\times x_i|\theta) - P(a\times x_i|\theta)\right), & x_i \in [a\times x_i,b\times x_i ] \\
1, & x_i \notin [a\times x_i,b\times x_i]
\end{array} \right.
\]
All to say, our mechanism under the amplified privacy standard for the truncated sensitive range uses the same $\alpha-$weighted pseudo posterior as is used for the non-range-restricted aDP privacy standard. This use of the same $\alpha-$weighted pseudo posterior contrasts with the mechanism for the range-averaged standard that uses a revised $\alpha^{\ast}$ because unit datum values are probablistically allocated to sensitive ranges. We may summarize our range-truncated mechanism with,
\begin{equation}
p_{\bm{\theta}}^{\bm{\alpha}}(\mathbf{x}) = p_{\bm{\theta}}^{\bm{I}^c \bm{\alpha}}(\mathbf{x}) p_{\bm{\theta}}^{\bm{I}}(\mathbf{x})
= \mathop{\prod}_{i=1}^{n} p^{I^c, \alpha_i}(x_i \mid \theta,a,b)
\mathop{\prod}_{i=1}^{n}p^I(x_i \mid \theta,a,b).
\end{equation}

In practice, one will construct their sensitive range to be centered on or surrounding the datum value, $x_i \in [a\times x_i, b\times x_i]$ so that $x_i$ \emph{always} lies in the sensitive range. Then our sensitivity calculation focuses on the first complement:

\begin{align}
\Delta_{\bm{\alpha},\bm{I},\mathbf{x}}
&= \max_{\theta \in \{\xi^{\bm{\alpha}^{\ast}(\bm x)}(\theta \mid \bm x)\}_{m}} \max_{i\in 1,\ldots,n}\lvert\alpha_i \times f_{\theta_m}(x_{i}) -  \log(P(b\times x_i|\theta_m)- P(a\times x_i|\theta_m))\rvert.
\end{align}
Recall the non-sensitive portion of the likelihood is expressed under range-averaged privacy in Section \ref{sec:averaged} through exponentiating the datum likelihood contribution by $\lambda_{i}$. 
The larger is $0 \leq \lambda_{i} \leq 1$ the greater the portion of the datum likelihood admitted to the pseudo posterior model.  We may interpret probability $\lambda_i$ from range-average privacy as the effective sample size or fraction of information the data producer thinks is ``public" and thus does not need to be protected. 
By contrast, one divides the likelihood by the $P(b\times x_{i} \mid \theta) - P(a\times x_{i} \mid \theta)$ under the range-truncation privacy standard. For any finite $a$
and $b$, the difference in CDF functions is $< 1$, which magnifies the likelihood as compared to an unrestricted range (in which case the divisor is exactly $1$).   One may think of relatively risky records as located in the tails that expresses small likelihood values and the divisor would increase those values, which partially mitigates the $\bm{\alpha}$ down-weighting.  We expect $\Delta_{\bm{\alpha},\bm{I},\mathbf{x}} \le
 \Delta_{\bm{\alpha},\mathbf{x}}$. When the interval $[a\times x_i,b\times x_i]$ widens into the whole range, the second term is $\log(1)=0$, which causes the range-truncated privacy standard to revert to the aDP privacy standard. 

\subsection{Range-truncated formal privacy}

Our new privacy standard conditions the posterior mechanism on the set of record-indexed ranges or balls chosen by the owner of the closely-held data. Let data $(x_{1},\ldots,x_{n}) \in \mathcal{X}^{n}$, where each $x_{i}\in\mathbb{R}$. Suppose ranges, $R_{1}, …,R_{n} \in \mathcal{X}^{n} \subset \mathbb{R}^{n}$ are designated to be sensitive by the owner of the closely-held data.   Suppose also that each underlying sensitive value $x_{i} \in R_{i}$ by construction.
The specification of this range gives rise to the truncation adjustment $P_\theta(R_{i}) = \int_{x \in R_{i}} p_{\theta}(x) d x$ and the range-truncated likelihood $p_{\theta_i}(x_i)/P_\theta(R_{i})$. We then use our same definition of event probabilities in DP and aDP from \citet{SavitskyWilliamsHu2020ppm}, but now conditioning on known information, $\mathbf{R}$.


\begin{definition}
(Range-Truncated Privacy under the Posterior Mechanism)
\begin{equation}
\mathop{\sup}_{\mathbf{x}\in \mathcal{X}^{n},\mathbf{x}'\in \mathcal{X}^{n-1}:\delta(\mathbf{x}, \mathbf{x}') = 1} \mathop{\sup}_{B \in \beta_{\Theta}} \frac{\xi^{\bm{I}^{c}(\mathbf{x})}(B \mid \mathbf{x},\mathbf{R})}{\xi^{\bm{I}^{c}(\mathbf{x}')}(B \mid \mathbf{x}',\mathbf{R})} \leq e^{\epsilon}, \nonumber
\end{equation}
\end{definition}

For a database sequence, $\mathbf{x} = \left(x_{1},\ldots,x_{n}\right) \in \mathcal{X}^{n}$  under $x_{1},\ldots,x_{n} \sim P_{\theta_{0}}$, for some $\theta_{0}\in\Theta$, we formulate the truncated (pseudo-) likelihood,
\begin{equation}
\label{trunclike}
p_{\theta}^{\bm{I}^{c}}(\mathbf{x}) = \mathop{\prod}_{i=1}^{n} p_{\theta}(x_i)/P_{\theta}(R_i),
\end{equation}
for each $\theta \in \Theta$ and $\mathbf{x}\in\mathcal{X}^{n}$.  

\begin{equation}
\label{pseudopost}
\xi^{\bm{I}^{c}(\mathbf{x})}(B \mid \mathbf{x}) = \frac{\int_{\theta \in B} p_{\theta}^{\bm{I}^{c}}(\mathbf{x}) d \xi(\theta)(\mathbf{x}) }{\phi^{\bm{I}^{c}(\mathbf{x})}},
\end{equation}
where $\phi^{\bm{I}^{c}(\mathbf{x}) }(\mathbf{x}) \overset{\Delta}{=} \int_{\theta \in \Theta} p_{\theta}^{\bm{I}^{c}}(\mathbf{x}) d \xi(\theta)$ normalizes the pseudo posterior distribution.
We note that the truncated interval offset $P_{\theta}(R_i)$ explicitly depends on the model and thus we include the $\theta$ subscript.

Now, the mechanism under the range-truncated privacy standard is simply the posterior distribution. Therefore, the appearance of $P_{\theta}(R_i)$ reflects the portion of the likelihood contribution that is not sensitive and can be used without ``penalty".  We see this by examining the resulting sensitivity under range restricted privacy,
\begin{align}
\Delta_{\bm{I},\mathbf{x}} &= \max_{\theta \in \{\xi(\theta \mid \bm x)\}_{m}} \max_{i\in 1,\ldots,n} | f_{\theta_m}(x_{i}) -  f_{\theta_m}(R_{i}) |,
\end{align}
where $f_{\theta_m}(R_{i}) = \log(P_{\theta_m}(R_i))$.
We expect that $\Delta_{\bm{I},\mathbf{x}} \le \Delta_{\mathbf{x}}$, which provides an amplification of aDP. While we can not guarantee that $(f_{\theta_m}(x_{i}) -  f_{\theta_m}(R_{i})) < 0$ for every $i$ there may be some value of $\theta$, $x_i$ with a very narrow $R_i$ that produce positive values. This would only occur for continuous outcomes with probability densities. Discrete outcomes have probability mass functions and thus would never have this issue.
However, we expect the largest magnitude $|f_{\theta_m}(x_{i}) -  f_{\theta_m}(R_{i})|$ to occur in the tails for very small values of $p_{\theta}(x_i)$ and thus very large negative values for $f_{\theta_m}(x_{i})$. 

\begin{definition}
(Range-Truncated Privacy under the Pseudo Posterior Mechanism)
\begin{equation}
\mathop{\sup}_{\mathbf{x}\in \mathcal{X}^{n},\mathbf{x}'\in \mathcal{X}^{n-1}:\delta(\mathbf{x}, \mathbf{x}') = 1} \mathop{\sup}_{B \in \beta_{\Theta}} \frac{\xi^{\bm{I}^c \bm{\alpha}(\mathbf{x})}(B \mid \mathbf{x},\mathbf{R})}{\xi^{\bm{I}^c \bm{\alpha}(\mathbf{x}')}(B \mid \mathbf{x}',\mathbf{R})} \leq e^{\epsilon}, \nonumber
\end{equation}
\end{definition}

The mechanism associated with the range-restricted pseudo posterior privacy standard is the pseudo posterior mechanism with the likelihood contribution for each unit $i \in (1,\ldots,n)$ exponentiated by $\alpha_{i}$.

\begin{theorem}
\label{th:dpresult}
$\forall \mathbf{x} \in \mathcal{X}^{n}, \mathbf{x}' \in \mathcal{X}^{n-1}:\delta(\mathbf{x}, \mathbf{x}') = 1, B \in \beta_{\Theta}$ (where $\beta_{\Theta}$ is the $\sigma-$algebra of measurable sets on $\Theta$) under $\bm{\alpha}(\cdot)$ with $\Delta_{\bm{\alpha},\bm{I}, \mathbf{x}} > 0$, 
\begin{equation}
\mathop{\sup}_{B \in \beta_{\Theta}} \,\,\,  \frac{\xi^{\bm{I}^c \bm{\alpha}(\mathbf{x})}(B \mid \mathbf{x})}{\xi^{\bm{I}^c \bm{\alpha}(\mathbf{x}')}(B \mid \mathbf{x}')} \leq \exp(2\Delta_{\bm{\alpha},\bm{I}, \mathbf{x}}),
\end{equation}
i.e., the pseudo posterior $\xi^{\bm{I}^c \bm{\alpha}(\mathbf{x})}(\cdot \mid \mathbf{x})$ has local privacy guarantee $2\Delta_{\bm{\alpha},\bm{I},\mathbf{x}}$.
\end{theorem}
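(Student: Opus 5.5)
The plan is to follow the argument of \citet{SavitskyWilliamsHu2020ppm}, as was done for the range-averaged version of this result. The only change is that each record's contribution $p_{\theta}(x_i)^{\alpha_i}$ is replaced by the range-truncated contribution $p_{\theta}(x_i)^{\alpha_i}/P_{\theta}(R_i)$.

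Write $g_{\theta}(x_i) = \alpha_i f_{\theta}(x_i) - \log P_{\theta}(R_i)$. Then the pseudo likelihood is $p_{\theta}^{\bm{I}^c\bm{\alpha}}(\mathbf{x}) = \exp\bigl(\sum_{i=1}^n g_{\theta}(x_i)\bigr)$. By definition,
\[
\Delta_{\bm{\alpha},\bm{I},\mathbf{x}} \geq \sup_{\theta} \max_i \lvert g_{\theta}(x_i)\rvert ,
\]
where the supremum runs over the parameter support, as in the local guarantee of \citet{SavitskyWilliamsHu2020ppm}. Since $\mathbf{x}'$ is obtained from $\mathbf{x}$ by deleting a single record, say record $j$, and the remaining records (together with their weights $\alpha_i$ and ranges $R_i$, which are conditioned on) are unchanged, we get
\[
p_{\theta}^{\bm{I}^c\bm{\alpha}}(\mathbf{x}) = p_{\theta}^{\bm{I}^c\bm{\alpha}}(\mathbf{x}')\, e^{g_{\theta}(x_j)} .
\]
This gives the two-sided bound
\[
e^{-\Delta_{\bm{\alpha},\bm{I},\mathbf{x}}}\, p_{\theta}^{\bm{I}^c\bm{\alpha}}(\mathbf{x}') \le p_{\theta}^{\bm{I}^c\bm{\alpha}}(\mathbf{x}) \le e^{\Delta_{\bm{\alpha},\bm{I},\mathbf{x}}}\, p_{\theta}^{\bm{I}^c\bm{\alpha}}(\mathbf{x}') ,
\]
uniformly in $\theta$.

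Next, for any $B \in \beta_{\Theta}$, write the ratio of pseudo posteriors as a product of two factors:
\[
\frac{\int_B p_{\theta}^{\bm{I}^c\bm{\alpha}}(\mathbf{x})\,d\xi(\theta)}{\int_B p_{\theta}^{\bm{I}^c\bm{\alpha}}(\mathbf{x}')\,d\xi(\theta)}
\quad\text{and}\quad
\frac{\phi^{\bm{I}^c\bm{\alpha}}(\mathbf{x}')}{\phi^{\bm{I}^c\bm{\alpha}}(\mathbf{x})} .
\]
Integrating the pointwise upper bound over $B$ bounds the numerator ratio by $e^{\Delta_{\bm{\alpha},\bm{I},\mathbf{x}}}$. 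Integrating the pointwise lower bound over $\Theta$ bounds the normalizer ratio by $e^{\Delta_{\bm{\alpha},\bm{I},\mathbf{x}}}$. Multiplying the two gives $\exp(2\Delta_{\bm{\alpha},\bm{I},\mathbf{x}})$. Taking the supremum over $B$ yields the claim.

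The main obstacle is checking that the truncation factor does not break the single-record cancellation. The normalizer $P_{\theta}(R_i)$ depends on $\theta$, so it cannot be absorbed into the prior. It is, however, attached only to record $i$ and, given $\mathbf{R}$, depends only on $R_i$. Hence deleting record $j$ removes exactly the factor $e^{g_{\theta}(x_j)}$ and leaves all other factors identical. This is why the sensitivity must be defined through the combined quantity $\lvert \alpha_i f_{\theta}(x_i) - \log P_{\theta}(R_i)\rvert$ rather than $\lvert\alpha_i f_{\theta}(x_i)\rvert$ alone.

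A second, more technical point is that $\Delta_{\bm{\alpha},\bm{I},\mathbf{x}}$ is computed as a maximum over posterior draws. Following \citet{SavitskyWilliamsHu2020ppm}, I would interpret it as a supremum over the relevant parameter set, and assume finiteness and $\Delta_{\bm{\alpha},\bm{I},\mathbf{x}}>0$ as stated. I would also assume $P_{\theta}(R_i)>0$, which holds because $x_i \in R_i$ by construction.
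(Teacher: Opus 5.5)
Your proposal is correct and matches the paper's proof, which just substitutes $p_{\theta}^{\bm{I}^{c}\bm{\alpha}}$ into the argument of \citet{SavitskyWilliamsHu2020ppm}; you have written that argument out: cancel the single deleted record's factor $e^{g_{\theta}(x_j)}$, bound it by $e^{\Delta_{\bm{\alpha},\bm{I},\mathbf{x}}}$, and pick up a second such factor from the normalizer. You also correctly note that the argument requires reading $\Delta_{\bm{\alpha},\bm{I},\mathbf{x}}$ as a supremum over the parameter space rather than a maximum over finitely many posterior draws, and holding the remaining records' weights and ranges fixed under $\bm{\alpha}(\cdot)$ and $\mathbf{R}$.
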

\begin{proof}
    Replace $p_{\theta}^{\bm{\alpha}(\mathbf{x})}$ with $p_{\theta}^{\bm{I}^{c}\bm{\alpha}(\mathbf{x})}$ and $\xi^{\bm{\alpha}(\mathbf{x})}\left(B\mid\mathbf{x}\right)$ with $\xi^{\bm{I}^{c}\bm{\alpha}(\mathbf{x})}\left(B\mid\mathbf{x}\right)$ in proofs in Appendix of \citet{SavitskyWilliamsHu2020ppm}.
\end{proof}

Range-truncated aDP may be viewed as an amplification of aDP without any relaxation since the observed datum is fully allocated to the sensitive range, rather than using averaging to allocate some portion of its contribution as in range-averaged aDP introduced in Section \ref{sec:averaged}.

\subsection{General formulation for range-restricted privacy}
We may more fully understand our range-averaged and range-truncated privacy standards by viewing them as special cases of a more general privacy framework that incorporates sensitive range information to sharpen the privacy guarantee.

Let $\bm{\gamma}(\mathbf{R},\bm{\theta}, \mathcal{M})$ be index parameters that represent the contribution for assumed publicly known ranges, $\mathbf{R} = (R_{1},\ldots,R_{n}) \in \mathbb{R}^{n}$.

Define the pseudo likelihood decomposition between known or insensitive contributions, on the one hand, and sensitive contributions, on the other hand.

\begin{equation}
p_{\bm{\theta}}^{\bm{\alpha}^{\ast}}(\mathbf{x}) = p_{\bm{\theta},\bm{\gamma}^c}^{\bm{\alpha}}(\mathbf{x}) \times p_{\bm{\theta},\bm{\gamma}}(\mathbf{x}),
\end{equation}
where $\bm{\alpha}^{\ast}$ denotes the range-restricted, risk-based weights for the pseudo posterior mechanism while $\bm{\alpha}$ denotes the original risk-based weights for the pseudo posterior mechanism with respect to the aDP standard. We suppress the dependence of $\bm{\gamma}$ on $\mathbf{R}$ for readability.

Under the range-truncated standard,
$\gamma_{i}(\theta) = P(R_i\mid\theta)$ with $p_{\theta,\gamma_{i}^c}^{\bm{\alpha}} = p_{i}^{\alpha}/\gamma_{i}(\theta)$ and $p_{\theta,\gamma_{i}} = \gamma_{i}(\theta)$ so that $\alpha^{\ast}_i = \alpha_i$.  We recall that the sensitive range is defined under the range-truncated standard such that $x_{i}$ is always in the sensitive range.  

By contrast, $\gamma_{i}(\mathcal{M}) = \lambda_{i} = \Pr_{\mathcal{M},R_i}\left(x^{\ast}_{i} \notin R_i \right)$ under range-averaged privacy with $p_{\theta,\bm{\gamma}^c}^{\bm{\alpha}} = p_{\theta}^{(1-\gamma_{i})\alpha}$ and $p_{\theta,\gamma_{i}}= p_{\theta}^{\gamma_{i}}$.  This produces $\alpha^{\ast}_i = (1-\gamma_{i})\alpha_i + \gamma_{i}$.

The general formulation for the range-restricted model sensitivity, $\Delta_{\bm{\alpha},\bm{\gamma},\mathbf{x}} $, is solely based on the sensitive portion of the log-pseudo likelihood,

\begin{align}
\Delta_{\bm{\alpha},\bm{\gamma},\mathbf{x}}
&= \max_{\theta \in \{\xi^{\bm{\alpha}^{\ast}(\bm x)}(\theta \mid \bm x)\}_{m}} \max_{i\in 1,\ldots,n}\lvert f_{\bm{\theta}_m,\gamma_{i,m}^{c}}^{\alpha_{i}}(x_{i}) \rvert,
\end{align}

\noindent where $f_{\bm{\theta}_m,\gamma_{i,m}^{c}}^{\alpha_{i}}(x_{i}) = \log\left[p_{\bm{\theta}_{m},\gamma_{i,m}^c}^{\alpha_{i}}(x_{i})\right]$.  Focusing on the sensitive likelihood complement provides the privacy amplification under both the range-averaged and range-truncated privacy standards.  Examining the application to range-averaged privacy of the sensitivity construction from the likelihood complement,
$\Delta_{\bm{\alpha},\bm{\gamma},\mathbf{x}} = \max_{\theta \in \{\xi^{\bm{\alpha}^{\ast}(\bm x)}(\theta \mid \bm x)\}_{m}} \max_{i\in 1,\ldots,n} \lvert (1-\gamma_{i})\times \alpha_{i} \times f_{\theta_m}(x_{i}) \rvert$, where
$\bm{\gamma} = \bm{\lambda}$ and $\gamma_{i}^{c} = 1-\gamma_{i}$ under the range-averaged standard.  For the range-truncated standard, $\gamma_{i}(\theta) = P(b\times x_{i}\mid \theta) - P(a\times x_{i}\mid \theta)$ such that $\gamma_{i}^{c}(\theta) = 1/\gamma_{i}(\theta)$. Finally, $f_{\bm{\theta}_m,\gamma_{i,m}^{c}}^{\alpha_{i}}(x_{i}) = \alpha_{i} \times f_{\bm{\theta}_{m}}(x_{i}) - \log\gamma_{i}(\theta_{m})$.

We note that for the range-truncated standard $\gamma_{i}(\theta)$ depends only on the current models parameters and not on the underlying generating (unweighted) model $\mathcal{M}$. In contrast, for the range-averaged standard, $\gamma_{i}(\mathcal{M})$ depends on the predictive distribution of the generating model $\mathcal{M}$ but not explicitly on its parameters (which are integrated out) nor on the current model parameters $\theta$. 


We now proceed to a series of extensive simulation studies to illustrate the privacy amplication effects of the two novel range-restricted aDP standards.

\section{Simulation studies}
\label{sec:simulation}

In this section, we first illustrate how privacy guarantee strengthens the two novel range-restricted standards compared to the aDP standard in Section \ref{sec:simulation:story1}. Section \ref{sec:simulation:story2} is devoted to illustrate how the range-restricted standards can achieve higher utility for the same privacy budget compared to the aDP standard. Finally in Section \ref{sec:simulation:story3}, we focus on assigning a wider sensitive range for records in the right tail of the confidential data distribution to provide a higher level of privacy guarantee. 

\subsection{Privacy guarantee strengthens under restricted sensitive range}
\label{sec:simulation:story1}

Over repeated simulations, we generate $n = 2000$ records from $z \sim \textrm{Normal}(2, 1)$ and $x \sim \textrm{Lognormal}(z+1, 1)$ in order to mimic highly skewed data in real applications. For both range-averaged and range-truncated standards, we use $(a, b) = \{(0.4, 1.8), (0.6, 1.2)\}$ for two sets of sensitive bounds to represent the sensitive range information for which the data disseminator attempts to provide privacy protection. We use $S = 1000$ number of values generated to calculate $\lambda_i$ for the range-averaged standard.

Figure \ref{fig:sim-Lbounds-single} shows a single simulation example of the by-record Lipchitz values of the Unweighted synthesizer, the Weighted synthesizer (labeled as (-Inf, Inf) to represent no bounds of $(a, b)$), the two range-averaged synthesizers (labeled as (0.4, 1.8) avg and (0.6, 1.2) avg), and the two range-truncated synthesizers (labeled as (0.4, 1.8) trunc and (0.6, 1.2) trunc)). The privacy budget, $\epsilon_{\mathbf{x}}$, for each synthesizer is calculated as twice of the maximum Lipschitz bound. We can see that the Weighted synthesizer has the highest privacy budget. The two range-truncated synthesizers, namely the (0.4, 1.8) trunc and the (0.6, 1.2) trunc, are second and third. The two range-averaged synthesizers, namely the (0.4, 1.8) avg and (0.6, 1.2) avg, have the smallest privacy budgets. We note that the lower the privacy budget, the stronger the privacy guarantee.

As evident in Figure \ref{fig:sim-Lbounds-single}, for a single sample, the privacy budget decreases under a finite sensitive range for both range-averaged  and range-truncated  synthesizers, compared to the Weighted synthesizer. In other words, the two sets of range-restricted synthesizers produce stronger privacy guarantees compared to the Weighted. This is expected since the range-averaged and the range-truncated synthesizers utilize the information from the data disseminator and provide privacy protection to the sensitive range, whereas the Weighted synthesizer attempts to protect the full range. Moreover, for either range-restricted standard, shortening the sensitive bounds $(a, b)$ from (0.4, 1.8) to (0.6, 1.2) further decreases the privacy budget, corresponding to privacy protection for a narrower sensitive range. The decreasing effect is more dramatic for the range-averaged synthesizer, as can be seen by the sharp decrease of the maximum Lipchitz bound from (0.4, 1.8) avg to (0.6, 1.2) avg, compared to the range-truncated pair, namely (0.4, 1.8) trunc and (0.6, 1.2) trunc. 

Within same sensitive bounds $(a, b)$ choice (e.g. the pair of blue for $(0.4, 1.8)$ and the pair of green for $(0.6, 1.2)$), the range-averaged synthesizer produces further privacy budget decrease (i.e., stronger privacy guarantee) compared to the range-truncated synthesizer. We recall that the two sets of range-restricted synthesizers are designed to use information from the sensitive range in different ways. 
The range-averaged synthesizer, on the one hand, uses distributional information by computing probabilities of falling within a sensitive range. The range-truncated synthesizer, on the other hand, uses only the end points of the sensitive range. 
In other words, the range-averaged synthesizer uses more ``public" information than the range-truncated synthesizer, and hence produces a stronger, more focused and amplified privacy guarantee.  Yet, the range-averaged privacy standard defines and incorporates $1-\lambda_{i}$, a probability that record $i$ is sensitive, by averaging over draws from the model posterior predictive distribution.  While the privacy standard is mathematically ``formal" in that it is provable and independent of putative intruder behaviors, the averaging is different from the pure upper bounding that characterizes differential privacy.  By contrast, the range-truncated uses less information focused solely on the end points of the sensitive range and avoids any averaging.

\begin{figure}[H]
  \centering
   \includegraphics[width=1\textwidth]{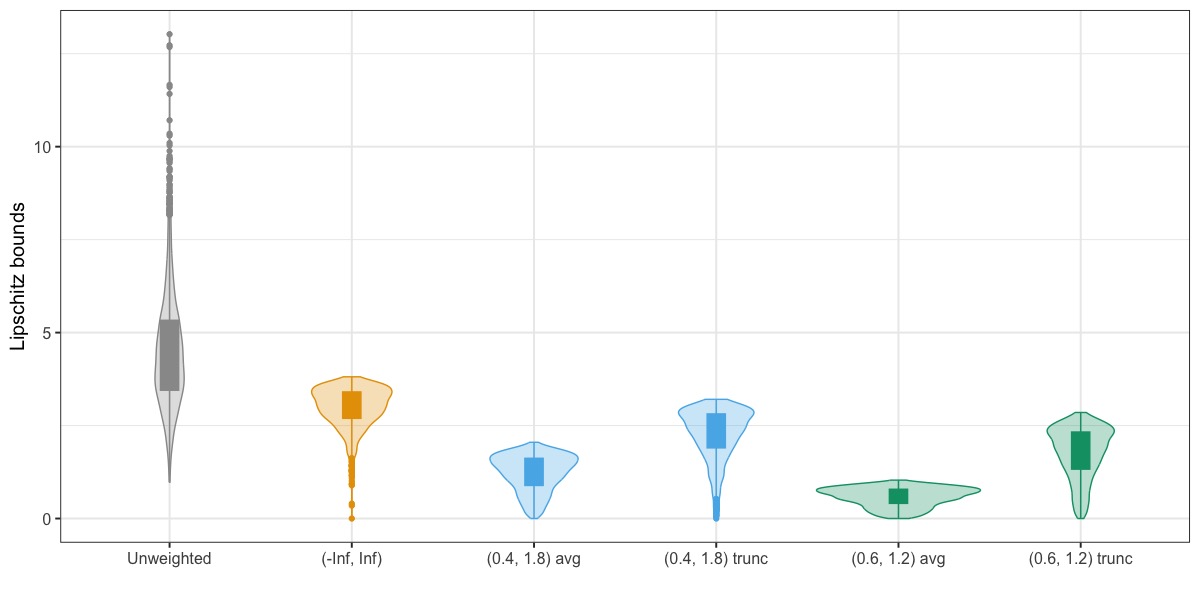}
      \caption{Violin plots of by-record Lipschitz bounds of Unweighted, (-Inf, Inf) (i.e., no bounds as Weighted), $(0.4, 1.8)$ averaged, $(0.4, 1.8)$ truncated, $(0.6, 1.2)$ averaged, and $(0.6, 1.2)$ truncated, over a single sample.}
      \label{fig:sim-Lbounds-single}
\end{figure}

Figure \ref{fig:sim-Lbounds-repeated} shows results from a Monte Carlo simulation of 100 repeated samples of the same data generating model. Once again, we observe lower privacy budgets, i.e., stronger privacy guarantees, from the two sets of range-restricted synthesizers compared to the Weighted synthesizer. Moreover, shortening the sensitive range $(a, b)$ provides a stronger privacy guarantee for either range-restricted synthesizer. Lastly, the range-averaged synthesizer produces a stronger and amplified privacy guarantee than the range-truncated synthesizer with the same $(a, b)$ choice. 

\begin{figure}[H]
  \centering
   \includegraphics[width=1\textwidth]{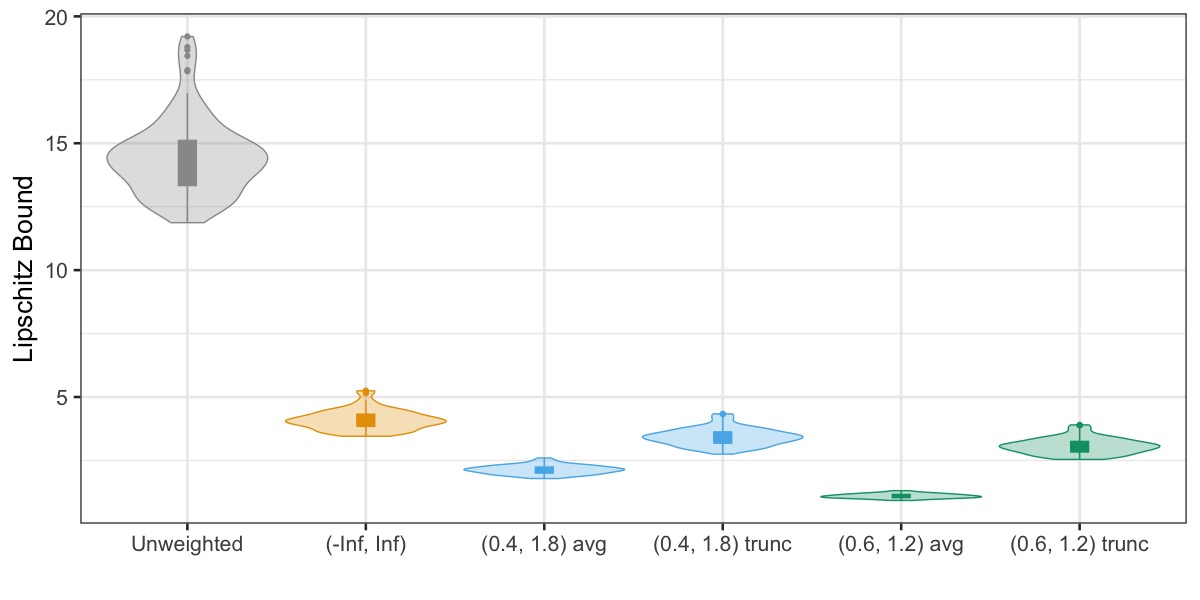}
      \caption{Violin plots of Lipschitz bounds of Unweighted, (-Inf, Inf) (i.e., no bounds as Weighted), $(0.4, 1.8)$ averaged, $(0.4, 1.8)$ truncated, $(0.6, 1.2)$ averaged, and $(0.6, 1.2)$ truncated, over 100 repeated samples.}
      \label{fig:sim-Lbounds-repeated}
\end{figure}

\begin{figure}[H]
  \centering
   \includegraphics[width=1\textwidth]{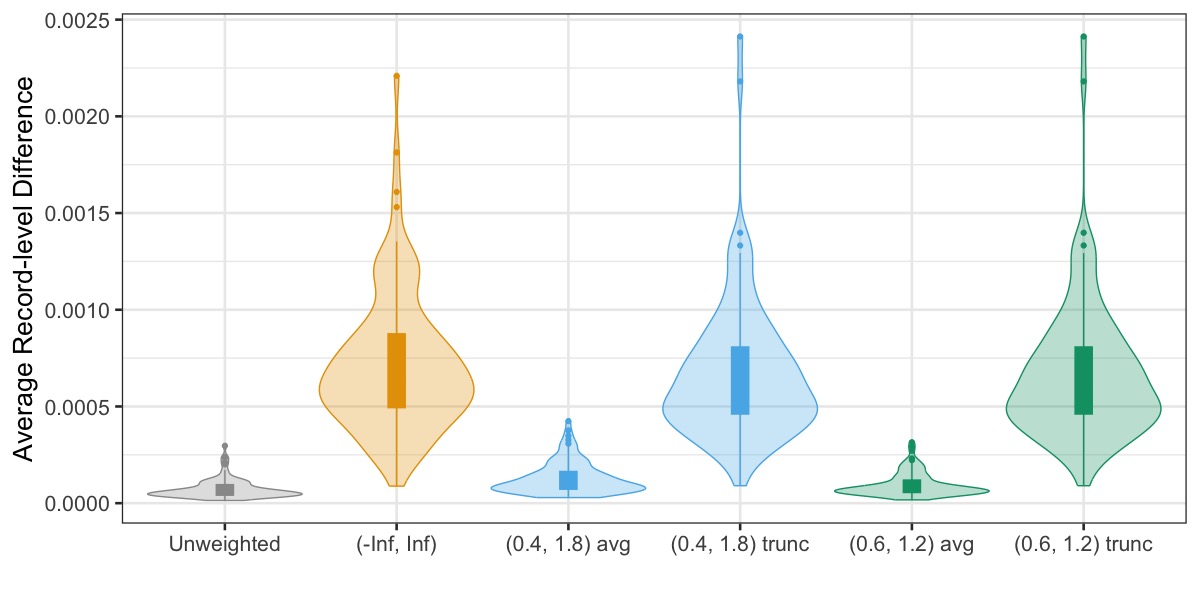}
      \caption{Violin plots of average metric of ECDF of Unweighted, (-Inf, Inf) (i.e., no bounds as Weighted), $(0.4, 1.8)$ averaged, $(0.4, 1.8)$ truncated, $(0.6, 1.2)$ averaged, and $(0.6, 1.2)$ truncated, over 100 repeated samples.}
      \label{fig:sim-ua}
\end{figure}

\begin{figure}[H]
  \centering
   \includegraphics[width=1\textwidth]{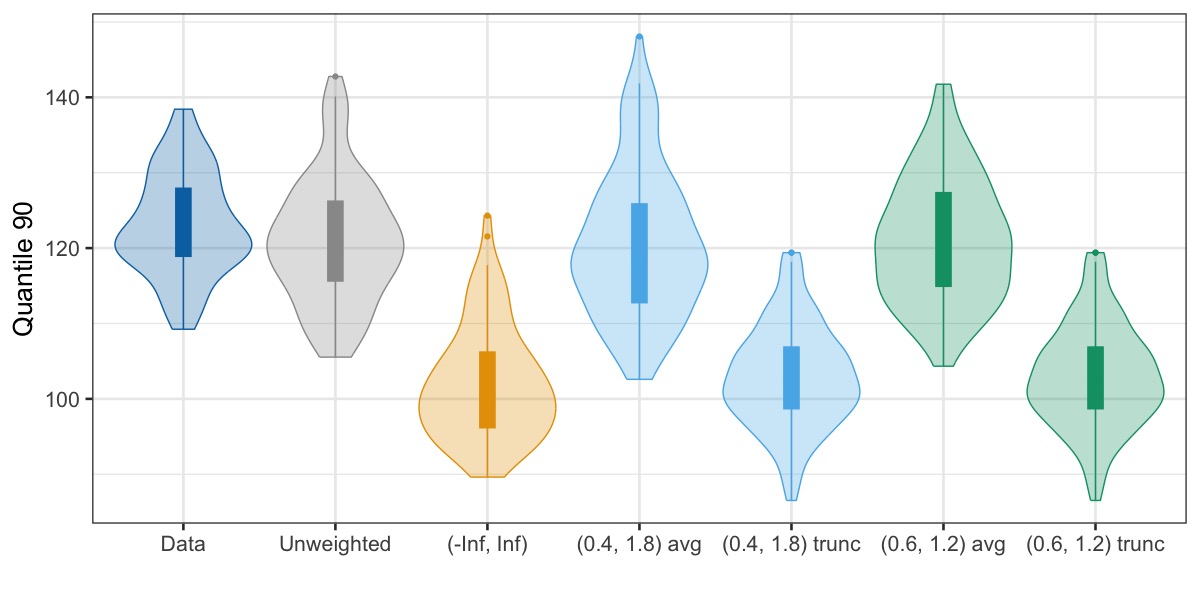}
      \caption{Violin plots of Q90s of Unweighted, (-Inf, Inf) (i.e., no bounds as Weighted), $(0.4, 1.8)$ averaged, $(0.4, 1.8)$ truncated, $(0.6, 1.2)$ averaged, and $(0.6, 1.2)$ truncated, over 100 repeated samples.}
      \label{fig:sim-Q90s}
\end{figure}

We now consider the utility of the simulated synthetic data from this set of synthesizers. These results are based on 100 repeated samples. For global utility that focuses on distributional similarities of the confidential and the synthetic data, we present the average record-level difference from the ECDF statistics in Figure \ref{fig:sim-ua} \citep{Woo2009JPC, HuBowen2024WIRES}. For this utility metric, the smaller the value, the higher the utility. We can see that the range-averaged synthesizers produce synthetic data with the highest utility. Thus it accomplishes both a privacy amplification and improvement in utility of the synthetic data.  By contrast, the utility is the same for the Weighted and range-truncated synthesizers (up to Monte Carlo sampling error) because the \emph{same} privacy $\bm{\alpha}$ risk-based weights are used under both.  Then for the same utility, we see that range-truncated produces a stronger or amplified privacy guarantee than does the Weighted synthesizer. The shortening of the sensitive range $(a, b)$ further improves the utility for either range-restricted synthesizer. 

For analysis-specific utility, we focus on the 90\% quantile statistic in Figure \ref{fig:sim-Q90s}. For this utility metric, the closer to the data distribution (far left), the higher the utility of a synthesizer. As with the global utility metric, the range-averaged synthesizers preserve the highest utility. Similarly, a smaller sensitive range $(a, b)$ corresponds to higher utility. 

In summary, both sets of range-restricted synthesizers produce a stronger and amplified privacy guarantee than the Weighted by utilizing the information possessed by the data disseminator. By using more of the information through computing probabilities of being in the sensitive range, the range-averaged synthesizers expends less privacy budget than the range-truncated synthesizers for the same sensitive range choice of $(a, b)$. Moreover, the amplified privacy guarantee does not come at the price of compromised utility preservation. In fact, the range-averaged synthesizers produce synthetic data with very high utility. Lastly, shortening the sensitive range $(a, b)$ further enhances the privacy guarantee (i.e., reduces the privacy budget and achieves a stronger privacy guarantee) and improves utility, most evidently for the range-averaged synthesizers.

Lastly, Figure \ref{fig:sim-Lbounds-ns} confirms that incorporating a sensitive range, either through the range-averaged standard or the range-truncated standard, maintains the asymptotic feature of aDP that as the sample size increases, the local $\Delta_{\bm{\alpha},\mathbf{x}}$ contracts onto the global $\Delta_{\bm{\alpha}}$.

\begin{figure}[H]
  \centering
   \includegraphics[width=1\textwidth]{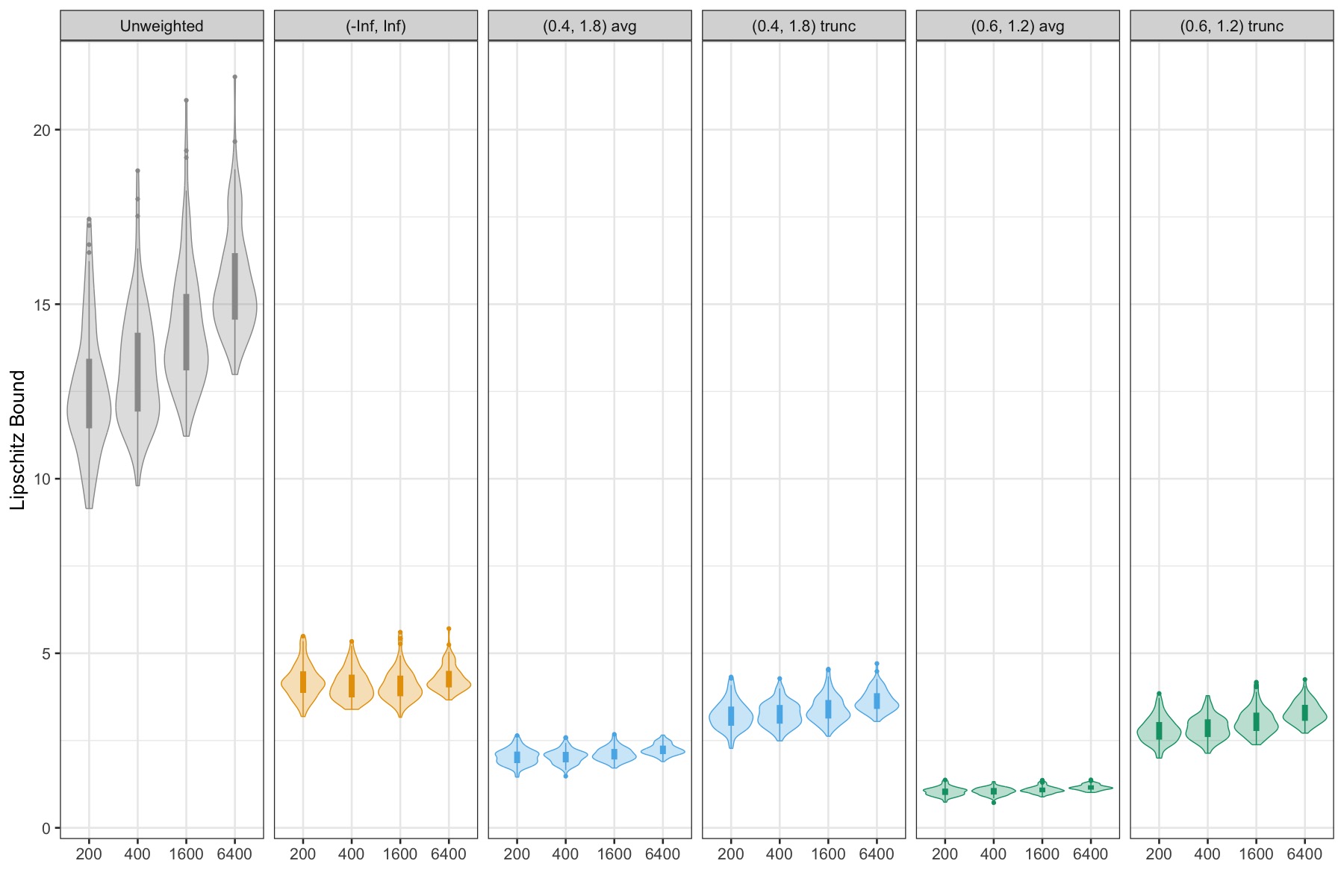}
      \caption{Violin plots of Lipschitz bounds of Unweighted, (-Inf, Inf) (i.e., no bounds as Weighted), $(0.4, 1.8)$ averaged, $(0.4, 1.8)$ truncated, $(0.6, 1.2)$ averaged, and $(0.6, 1.2)$ truncated, over 100 repeated samples and four different sample sizes $n = \{200, 400, 1600, 6400\}$.}
      \label{fig:sim-Lbounds-ns}
\end{figure}

\subsection{Restricting sensitive ranges achieves higher utility for the same privacy budget}
\label{sec:simulation:story2}

Data disseminators may wish to compare data synthesized using sensitive range information to data synthesized without using that information with both methods calibrated to the same privacy budget. Our newly proposed range-restricted standards offer such flexibility and in this part of the simulation, we focus on demonstrating the flexibility through the range-truncated standard. 

We continue with our data generating model of $z \sim \textrm{Normal}(2, 1)$ and $x \sim \textrm{Lognormal}(z+1, 1)$ for $n = 2000$ records for a single sample. In order to reach similar privacy budgets across different synthesizers, we scale down the privacy weights $\alpha_i$'s of the Weighted and those of the (0.4, 1.8) trunc synthesizers to bring down their privacy budgets to close to that of the (0.6, 1.2) trunc synthesizer. Such scaling achieves a similar privacy budget, roughly $\epsilon_{\mathbf{x}} = 5.7$, across the three synthesizers. 
The scaling constants are 0.73 for the Weighted and 0.88 for the (0.4, 1.8) trunc synthesizer, where the constant is 1 for the (0.6, 1.2) trunc synthesizer (i.e., no scaling). 

Figure \ref{fig:sim-Lbounds-single-truncated-similar-eps} presents the by-record Lipcthiz values of the values of the Unweighted synthesizer, the
Weighted synthesizer (labeled as (-Inf, Inf) to represent no bounds of (a, b)), and the two range-truncated synthesizers
(labeled as (0.4, 1.8) trunc and (0.6, 1.2) trunc)). The privacy budget, $\epsilon_{\mathbf{x}}$, for each synthesizer is calculated as twice of the maximum Lipschitz bound. We can see that through appropriate amount of scaling of the privacy weights $\alpha_i$'s, the three synthesizers now have similar privacy budgets, roughly $\epsilon_{\mathbf{x}} = 5.7$. The scaling effects of the privacy weights $\alpha_i$'s in each synthesizer are manifested in Figure \ref{fig:sim-alphas-single-truncated-similar-eps}, where the Unweighted synthesizer have all weights at 1 by design, and the (0.6, 1.2) trunc synthesizer has the highest weights, followed by the (0.4, 1.8) trunc synthesizer and the Weighted synthesizer. 

\begin{figure}[H]
  \centering
   \includegraphics[width=1\textwidth]{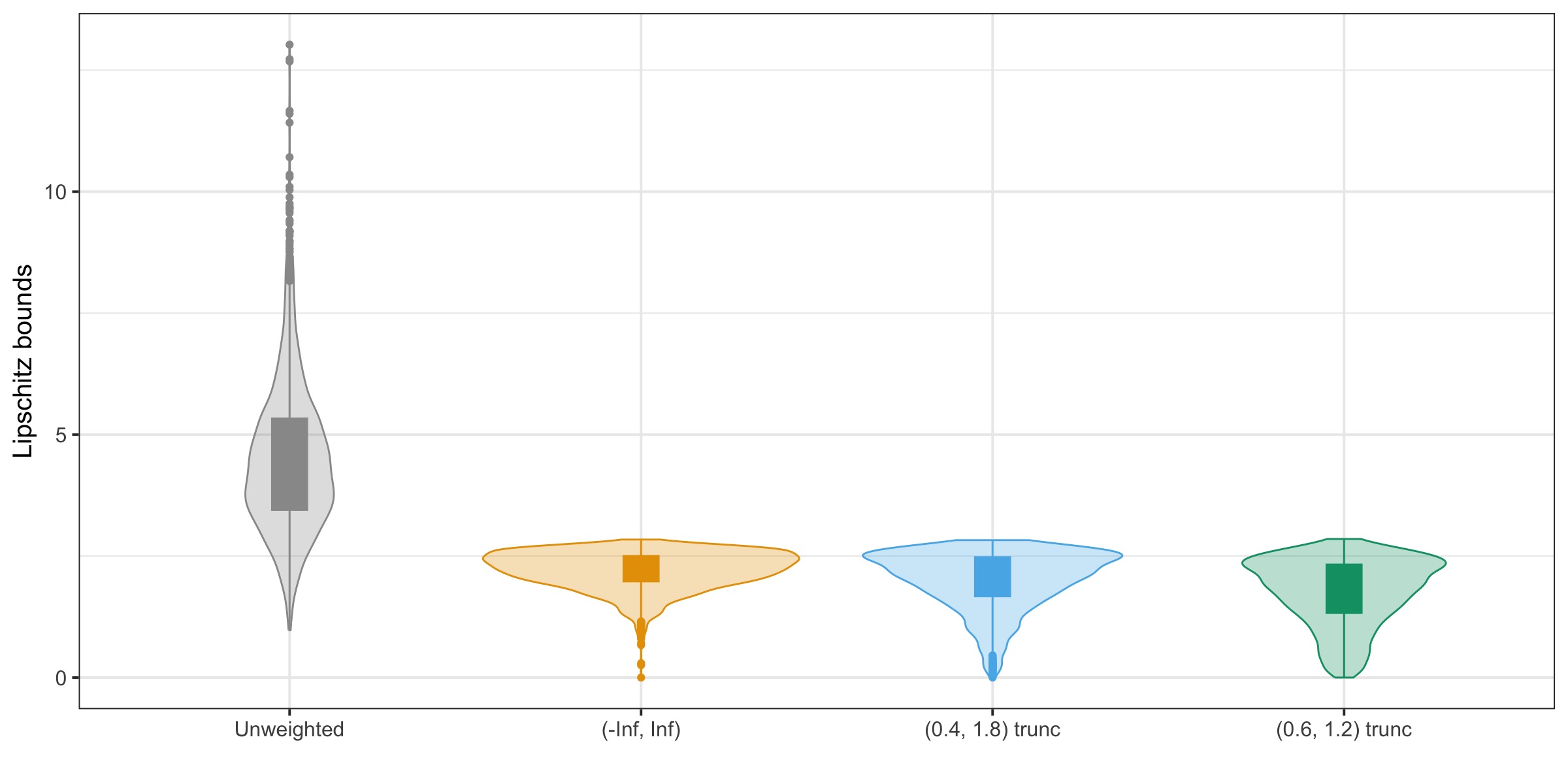}
      \caption{Violin plots of Lipschitz bounds of Unweighted, (-Inf, Inf) (i.e., no bounds as Weighted), $(0.4, 1.8)$ truncated, and $(0.6, 1.2)$ truncated, over a singe sample.}
      \label{fig:sim-Lbounds-single-truncated-similar-eps}
\end{figure}

\begin{figure}[H]
  \centering
   \includegraphics[width=1\textwidth]{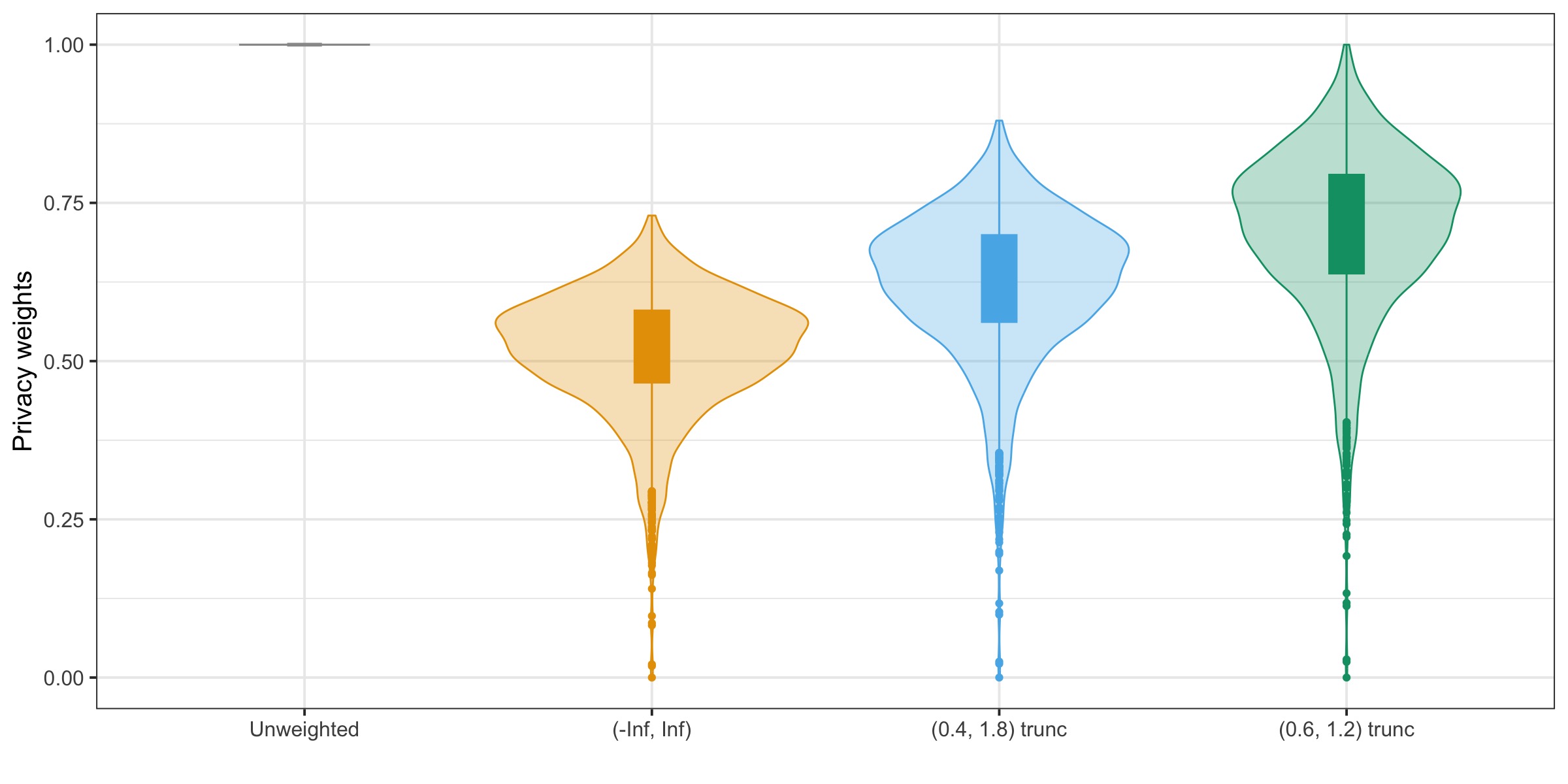}
      \caption{Violin plots of privacy weights of Unweighted, (-Inf, Inf) (i.e., no bounds as Weighted), $(0.4, 1.8)$ truncated, and $(0.6, 1.2)$ truncated, over a singe sample.}
    \label{fig:sim-alphas-single-truncated-similar-eps}
\end{figure}

After achieving similar privacy bugdets through scaling of the weights $\alpha_i$'s, we now turn to comparing the utility results, where we expect to achieve higher utility under sensitive range. Recall that results in Section \ref{sec:simulation:story1}, specifically Figure \ref{fig:sim-ua} and Figure \ref{fig:sim-Q90s}, suggest that without the scaling of the privacy weights $\alpha_i$'s, the Weighted synthesizer produces synthetic data with the same utility as range-truncated. Now, by scaling down the privacy weights of the Weighted synthesizer and the (0.4, 1.8) trunc synthesizer, we expect to see a utility reduction of these two synthesizers due to the decrease in privacy weights \citep{HuSavitskyWilliams2022JSSAM, HuWilliamsSavitsky2025SS}. 

Figures \ref{fig:sim-mean-single-truncated-similar-eps} through \ref{fig:sim-Q90-single-truncated-similar-eps} present analysis-specific utility of the means, medians, and 90\% quantiles. Across the three synthesizers, the (0.6, 1.2) trunc synthesizer has the highest utility in all three analysis-specific utility metrics across the three synthesizers, followed by the (0.4, 1.8) trunc synthesizer and then the Weighted synthesizer.

These results confirm that by achieving similar privacy budgets through down-scaling of the privacy weights $\alpha_i$'s of the Weighted and the (0.4, 1.8) trunc synthesizers, we produce synthetic data with the highest utility with the (0.6, 1.2) trunc synthesizer. In fact, both the (0.4, 1.8) trunc and the (0.6, 1.2) trunc synthesizers have higher utility than the Weighted, demonstrating the flexibility of our newly proposed range-truncated standards in producing synthetic data with higher utility and a similar privacy guarantee. Such flexibility widens the range of tools of data disseminators when it comes to incorporating sensitive information while fine tuning the balance of utility and privacy. 

\begin{figure}[H]
  \centering
   \includegraphics[width=1\textwidth]{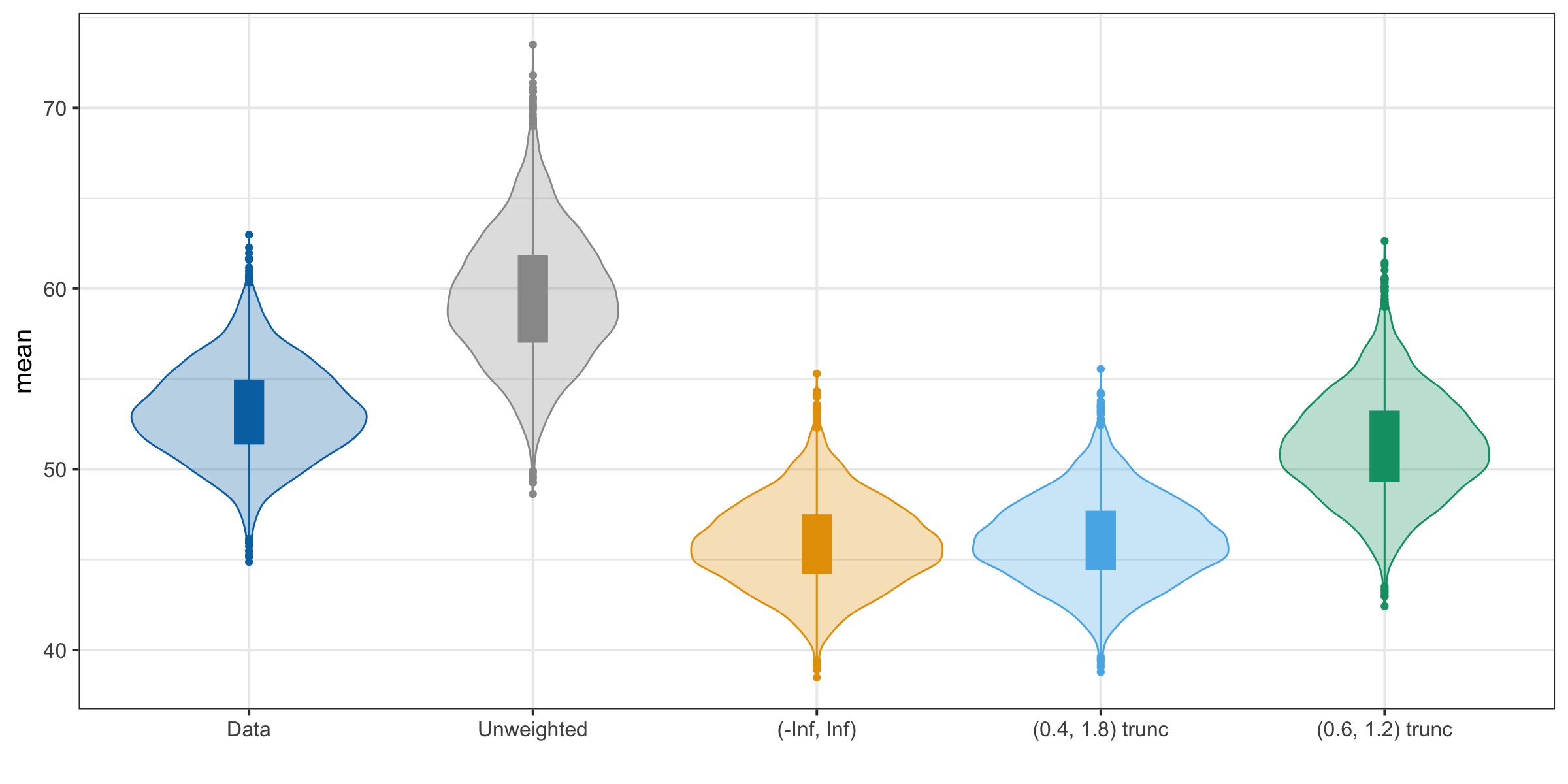}
      \caption{Violin plots of means of data, Unweighted, (-Inf, Inf) (i.e., no bounds as Weighted), $(0.4, 1.8)$ truncated, and $(0.6, 1.2)$ truncated, over a singe sample.}
      \label{fig:sim-mean-single-truncated-similar-eps}
\end{figure}

\begin{figure}[H]
  \centering
   \includegraphics[width=1\textwidth]{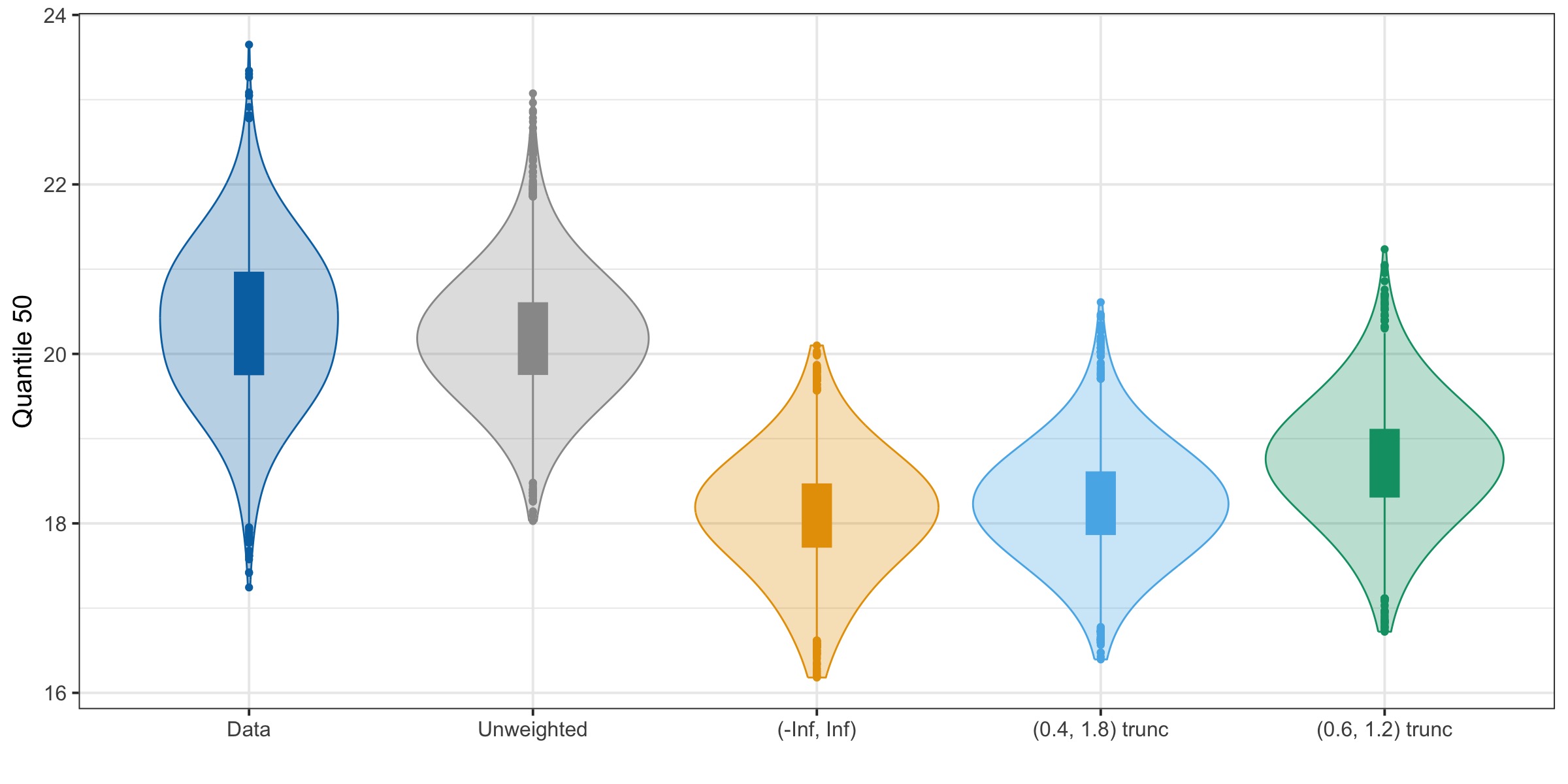}
      \caption{Violin plots of medians of data, Unweighted, (-Inf, Inf) (i.e., no bounds as Weighted), $(0.4, 1.8)$ truncated, and $(0.6, 1.2)$ truncated, over a singe sample.}
      \label{fig:sim-Q50-single-truncated-similar-eps}
\end{figure}

\begin{figure}[H]
  \centering
   \includegraphics[width=1\textwidth]{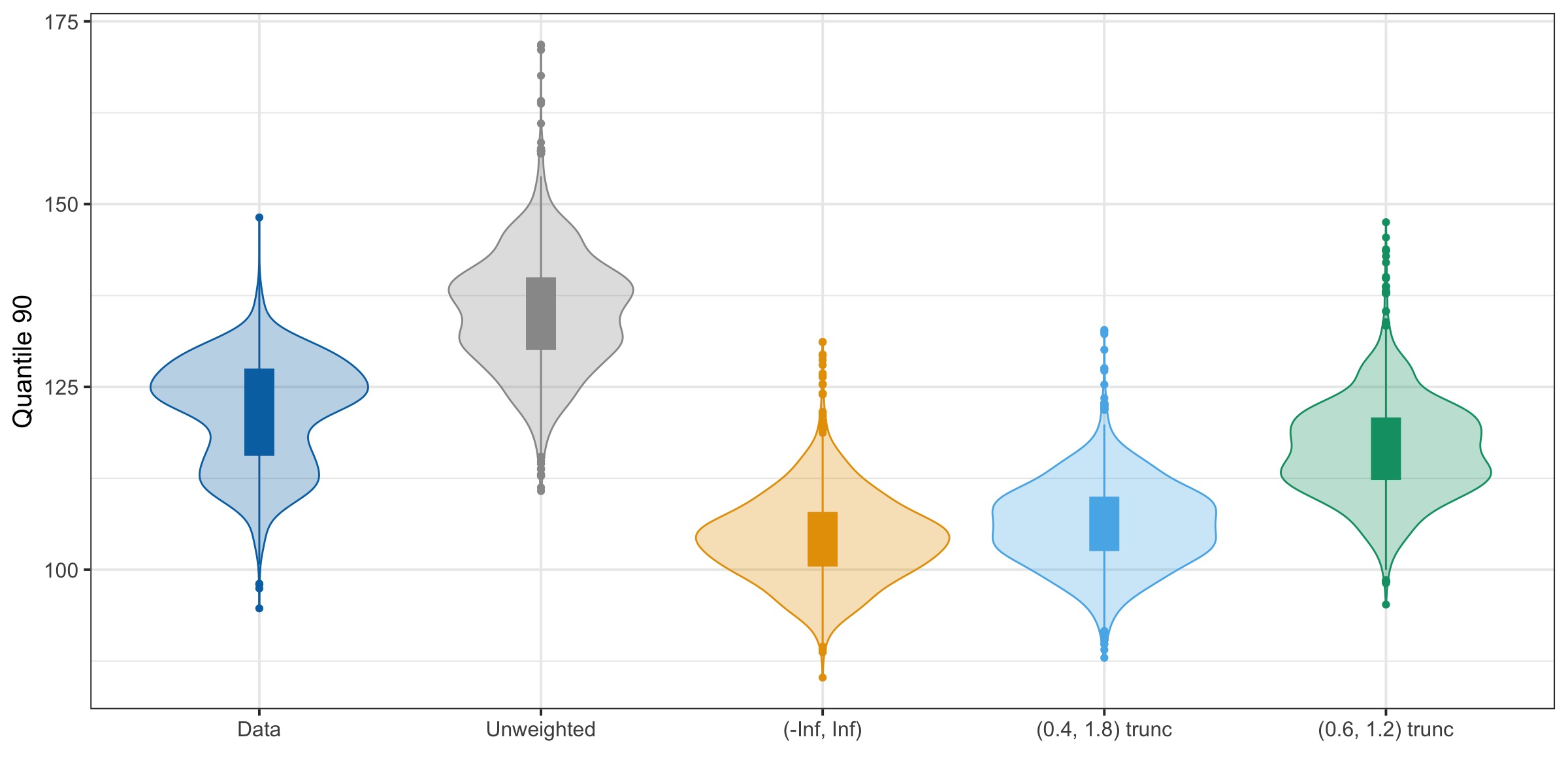}
      \caption{Violin plots of Q90s of data, Unweighted, (-Inf, Inf) (i.e., no bounds as Weighted), $(0.4, 1.8)$ truncated, and $(0.6, 1.2)$ truncated, over a singe sample.}
      \label{fig:sim-Q90-single-truncated-similar-eps}
\end{figure}

\subsection{Assigning wider sensitive range in distribution tail than mode}
\label{sec:simulation:story3}

Data disseminators may want to consider different sensitive ranges for different portions of the confidential data. For example, data disseminators may use a wider sensitive range for records in the right tail of the confidential data distribution, reflecting the need for a higher level of privacy protection of these large, outlying values. Our newly proposed range-restricted standards could easily accommodate such needs and we focus on demonstrating the flexibility through the range-averaged standard in this section.

We continue with our data generating model of $z \sim \textrm{Normal}(2, 1)$ and $x \sim \textrm{Lognormal}(z+1, 1)$ for $n = 2000$ records for 100 repeated samples. In this experiment, we focus on the choice of $(a, b) = (0.4, 1.8)$ for the range-averaged synthesizer, as shown in Section \ref{sec:simulation:story1} and Section \ref{sec:simulation:story2}. However, for large, outlying records in the right tail, we consider $(a', b') = (0.2, 2.4)$, a wider sensitive range than the usual $(a, b) = (0.4, 1.8)$ range, accommodating the needs of higher privacy protection for these records. We consider three different scenarios of large, outlying records receiving this wider sensitive range: top 1\%, top 5\%, and top 10\%. For Figure \ref{fig:sim-Lbounds-top-comparison} through Figure \ref{fig:sim-Q90s-top-comparison}, we label these synthesizers as \emph{(0.4, 1.8)} for the usual range-averaged synthesizer without special treatment for records in the tail, \emph{Top 1\%} for the synthesizer where top 1\% of the large, outlying records uses $(a', b') = (0.2, 2.4)$, \emph{Top 5\%} for the case of top 5\%, and \emph{Top 10\%} for the case of top 10\%. 

First, we examine the privacy budget comparison results in Figure \ref{fig:sim-Lbounds-top-comparison}. As expected, increasing the number of records receiving a wider sensitive range, from 1\% to 10\%, increases the privacy budget. Nevertheless, the Top 10\% synthesizer still has a lower privacy budget than the Weighted synthesizer, indicating a stronger privacy guarantee. This is largely due to the amplification encoded in the range-averaged standard. The Weighted synthesizer provides protection for the full range of all records, whereas the range-averaged synthesizer has a smaller sensitive range to protect for all records.

\begin{figure}[H]
  \centering
   \includegraphics[width=1\textwidth]{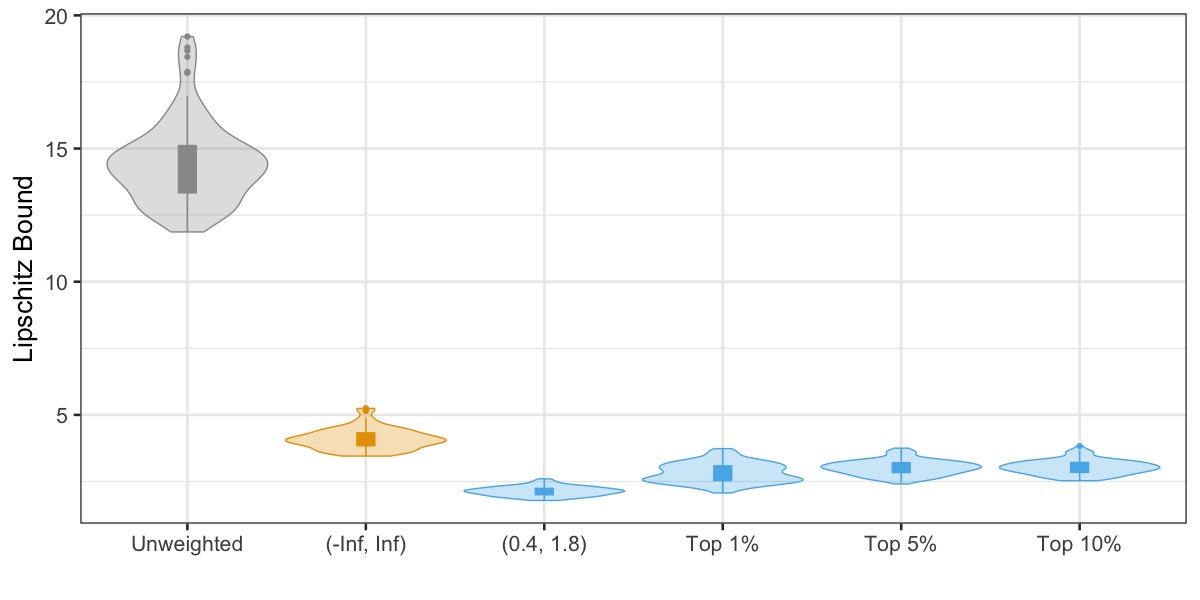}
      \caption{Violin plots of Lipschitz bounds of Unweighted, (-Inf, Inf) (i.e., no bounds as Weighted), $(0.4, 1.8)$ averaged, $(0.4, 1.8)$ averaged with Top 1\% of distribution receive wider sensitivity range of $(0.2, 2.4)$ , $(0.4, 1.8)$ averaged with Top 5\% of distribution receive wider sensitivity range of $(0.2, 2.4)$, and $(0.4, 1.8)$ averaged with Top 10\% of distribution receive wider sensitivity range of $(0.2, 2.4)$, over 100 repeated samples.}
      \label{fig:sim-Lbounds-top-comparison}
\end{figure}

For utility, recall in Section \ref{sec:simulation:story1}, a wider sensitive range in the range-averaged synthesizer provides higher privacy protection and hence introduces more down-weighting and results in lower utility. We therefore observe the wider sensitive range (0.4, 1.8) produces synthetic data with lower utility compared to the shorter sensitive range (0.6, 1.2), for both global utility and analysis-specific utility. Now when providing higher privacy protection to more records in the right tail with a wider sensitive range, from top 1\% to top 10\%, we expect to see more down-weighting and lower utility.

Recall the ECDF maximum record-level difference global utility metric, where higher values indicate lower utility. Figure \ref{fig:sim-ums-top-comparison} shows that as the top percentage value increases from 1\% to 5\% and then to 10\%, the utility decreases as the maximum record-level difference values go up. Similar results of the ECDF average record-level difference global utility metric are included in Appendix \ref{sec:appendixA} for further reading.

For analysis-specific utility comparison, we focus on the medians and 90\% quantile in Figure \ref{fig:sim-medians-top-comparison} and Figure \ref{fig:sim-Q90s-top-comparison} (results of the means are in Appendix \ref{sec:appendixA}). Once again, we observe decreasing utility as the top percentage value increases: the violin plot deviates further from the confidential data as the top percentage value increases from 1\% to 5\% and then to 10\%.

\begin{figure}[H]
  \centering
   \includegraphics[width=1\textwidth]{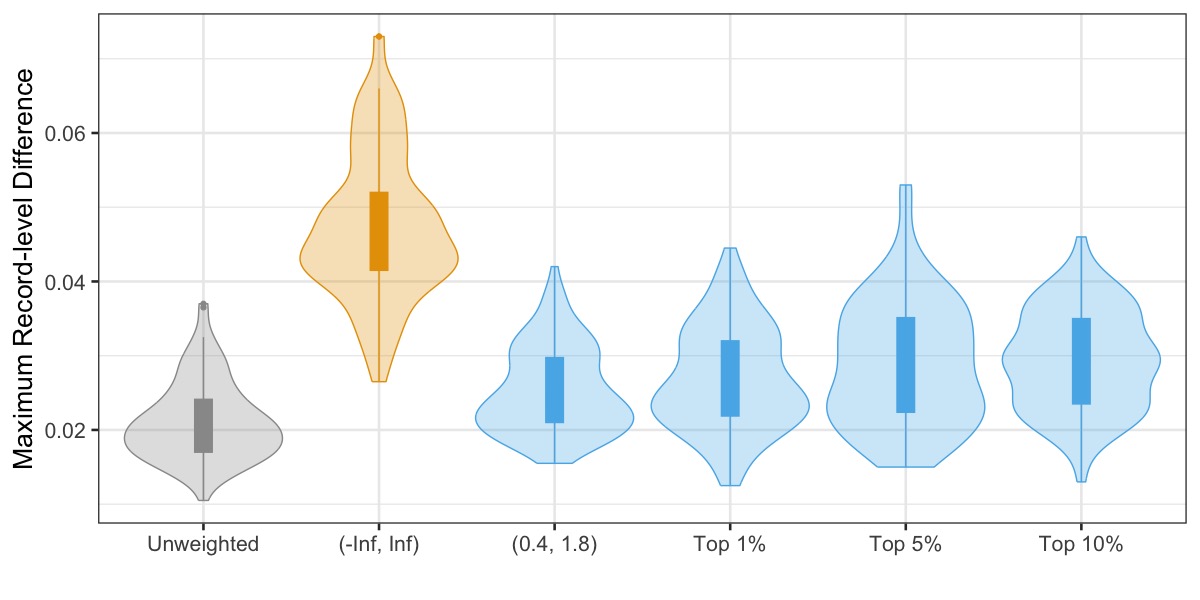}
      \caption{Violin plots of ECDF maximum record-level differences of synthetic data of Data, Unweighted, (-Inf, Inf) (i.e., no bounds as Weighted), $(0.4, 1.8)$ averaged, $(0.4, 1.8)$ averaged with Top 1\% of distribution receive wider sensitivity range of $(0.2, 2.4)$ , $(0.4, 1.8)$ averaged with Top 5\% of distribution receive wider sensitivity range of $(0.2, 2.4)$, and $(0.4, 1.8)$ averaged with Top 10\% of distribution receive wider sensitivity range of $(0.2, 2.4)$, over 100 repeated samples.}
      \label{fig:sim-ums-top-comparison}
\end{figure}

\begin{figure}[H]
  \centering
   \includegraphics[width=1\textwidth]{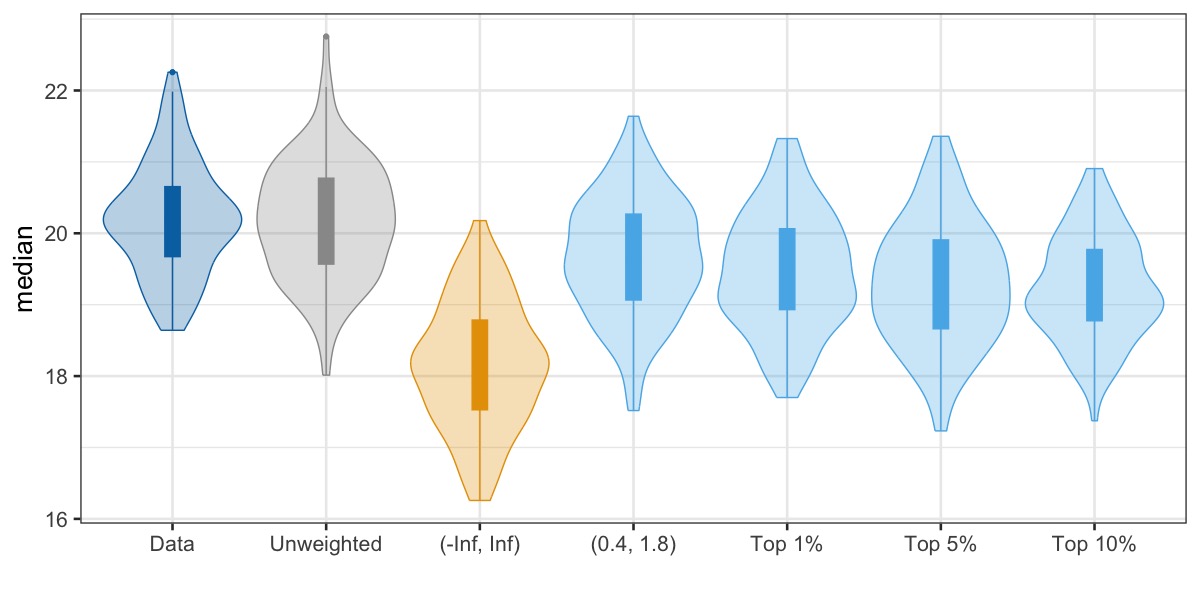}
      \caption{Violin plots of medians of synthetic data of Data, Unweighted, (-Inf, Inf) (i.e., no bounds as Weighted), $(0.4, 1.8)$ averaged, $(0.4, 1.8)$ averaged with Top 1\% of distribution receive wider sensitivity range of $(0.2, 2.4)$ , $(0.4, 1.8)$ averaged with Top 5\% of distribution receive wider sensitivity range of $(0.2, 2.4)$, and $(0.4, 1.8)$ averaged with Top 10\% of distribution receive wider sensitivity range of $(0.2, 2.4)$, over 100 repeated samples.}
      \label{fig:sim-medians-top-comparison}
\end{figure}

\begin{figure}[H]
  \centering
   \includegraphics[width=1\textwidth]{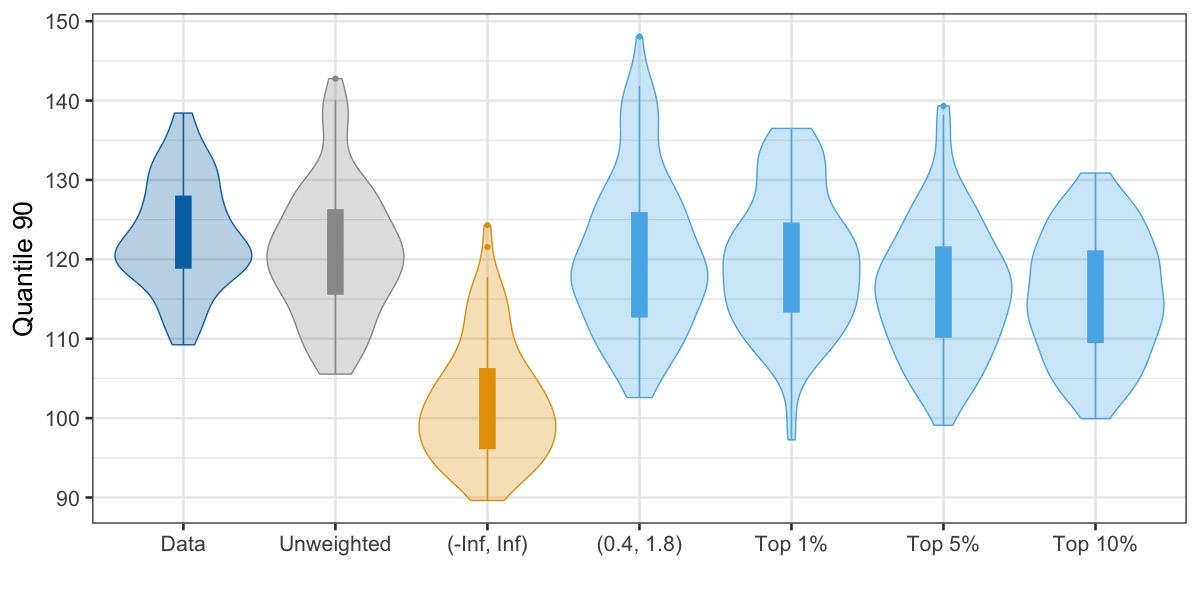}
      \caption{Violin plots of 90th quantiles of synthetic data of Data, Unweighted, (-Inf, Inf) (i.e., no bounds as Weighted), $(0.4, 1.8)$ averaged, $(0.4, 1.8)$ averaged with Top 1\% of distribution  receive wider sensitivity range of $(0.2, 2.4)$ , $(0.4, 1.8)$ averaged with Top 5\% of distribution receive wider sensitivity range of $(0.2, 2.4)$, and $(0.4, 1.8)$ averaged with Top 10\% of distribution receive wider sensitivity range of $(0.2, 2.4)$, over 100 repeated samples.}
      \label{fig:sim-Q90s-top-comparison}
\end{figure}

These privacy and utility results further illustrate the flexibility of our newly proposed range-restricted standards. Data disseminators could consider different sensitive ranges for different portions of the confidential data. In order to provide higher privacy protection for records in the right tail, data disseminators can use a wider sensitive range for these records. This essentially reduces the amount of privacy amplification offered by our range-restricted standards compared to a narrower sensitive range and results in a less amplified privacy guarantee with some reduction in utility.

\section{Application to an accelerated life testing dataset}
\label{sec:application}

We now turn to a real application of accelerated life testing  using Bayesian lognormal regression. The dataset called \texttt{fatigue} is available in the \texttt{BayesLN} R package \citep{BayesLN} and was also considered by \cite{FabriziTrivisano2016SJS}. The dataset consists of the number of test cycles (outcome variable) and a stress factor (a potential predictor variable) for a sample of specimens. We consider the outcome variable cycle and the predictor variable stress, both on the linear and logarithmic scale, with a lognormal regression.

As with the simulation studies, for both range-averaged and range-truncated standards, we use $(a, b) = \{(0.4, 1.8), (0.6, 1.2)\}$, for two sets of sensitive bounds to represent the sensitive range information that the data disseminator attempts to provide privacy protection to. We use $S = 1000$ number of values generated to calculate $\lambda_i$ for the range-averaged standard. We focus on demonstrating that our newly proposed range-restricted synthesizers will strengthen the privacy guarantee, as we have seen in the simulaiton results in Section \ref{sec:simulation:story1}. 

Figure \ref{fig:app-Lbounds} shows the by-record Lipchitz values of the Unweighted synthesizer, the Weighted synthesizer (labeled as (-Inf, Inf) to represent no bounds of $(a, b)$), the two range-averaged synthesizers (labeled as (0.4, 1.8) avg and (0.6, 1.2) avg), and the two range-truncated synthesizers (labeled as (0.4, 1.8) trunc and (0.6, 1.2) trunc)). The privacy budget, $\epsilon_{\mathbf{x}}$, for each synthesizer is calculated as twice of the maximum Lipschitz bound. 
As expected, the two sets of range-restricted synthesizers produce stronger privacy guarantees (lower privacy budgets) compared to the Weighted, by providing focused privacy protection to the sensitive range. For both sets, shortening the sensitive range bounds from (0.4, 1.8) to (0.6, 1.2) provides even stronger privacy guarantees (lower privacy budgets) given a shorter sensitive range for privacy protection. 

 Given the same sensitive range bounds $(a, b)$, the range-averaged synthesizer produces further privacy budget decrease (stronger privacy guarantee) compared to the range-truncated synthesizer. These results are consistent wit the mathematical formulations of the approaches in Section \ref{sec:averaged} and Section \ref{sec:truncated} and from the simulation study results in Section \ref{sec:simulation:story1}. The two sets of range-restricted synthesizers use information from the sensitive range in different ways: The range-averaged synthesizer uses distributional information by computing probabilities of falling within a sensitive range (i.e., more information) while the range-truncated synthesizer uses only the end points of the sensitive range (i.e., less information).

\begin{figure}[H]
  \centering
   \includegraphics[width=1\textwidth]{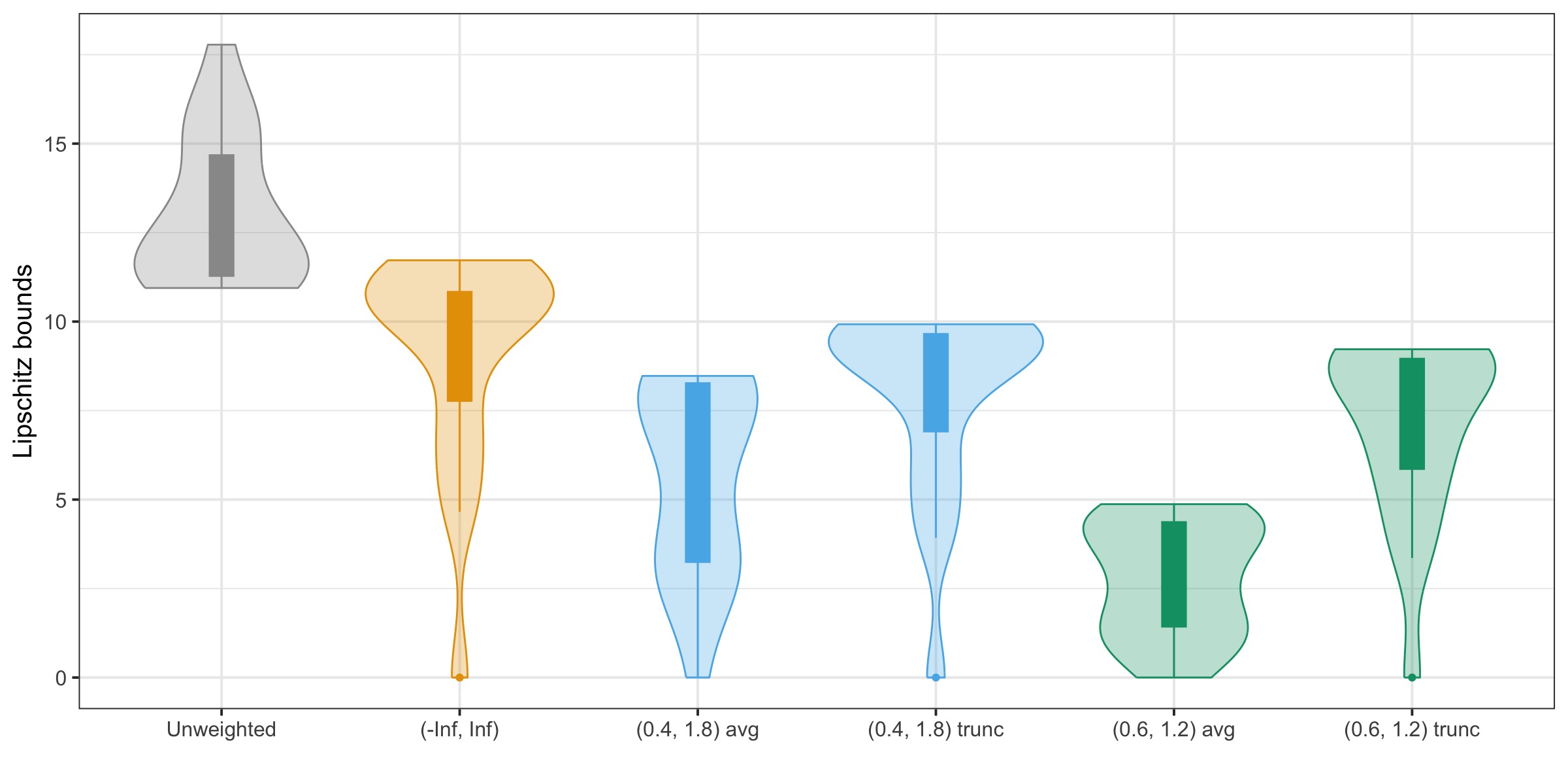}
      \caption{Violin plots of by-record Lipschitz bounds of Unweighted, (-Inf, Inf) (i.e., no bounds as Weighted), $(0.4, 1.8)$ averaged, $(0.4, 1.8)$ truncated, $(0.6, 1.2)$ averaged, and $(0.6, 1.2)$ truncated.}
     \label{fig:app-Lbounds}
\end{figure}

Comparing utility results, Figure \ref{fig:app-data} depicts the distribution of the confidential data and that of the simulated synthetic data from the list of synthesizers we consider. Recall that the range-truncated synthesizers are the same as the Weighted synthesizer with Lipschitz bounds calculated differently. Hence the identical synthetic data distribution in Figure \ref{fig:app-data} between (-Inf, Inf) (i.e., no bounds as Weighted), (0.4, 1.8) truncated, and (0.6, 1.2) truncated. The two range-averaged synthesizers preserve the confidential data distribution better than the Weighted and the two range-truncated synthesizers. Between the two ranged averaged approaches, the shorter sensitive range (0.6, 1.2) provided the best utility preservation. This is once again consistent with previous results, as a shorter sensitive range requires privacy protection for a smaller range of potential values to be protected and hence preserves a higher level of utility. 

\begin{figure}[H]
  \centering
   \includegraphics[width=1\textwidth]{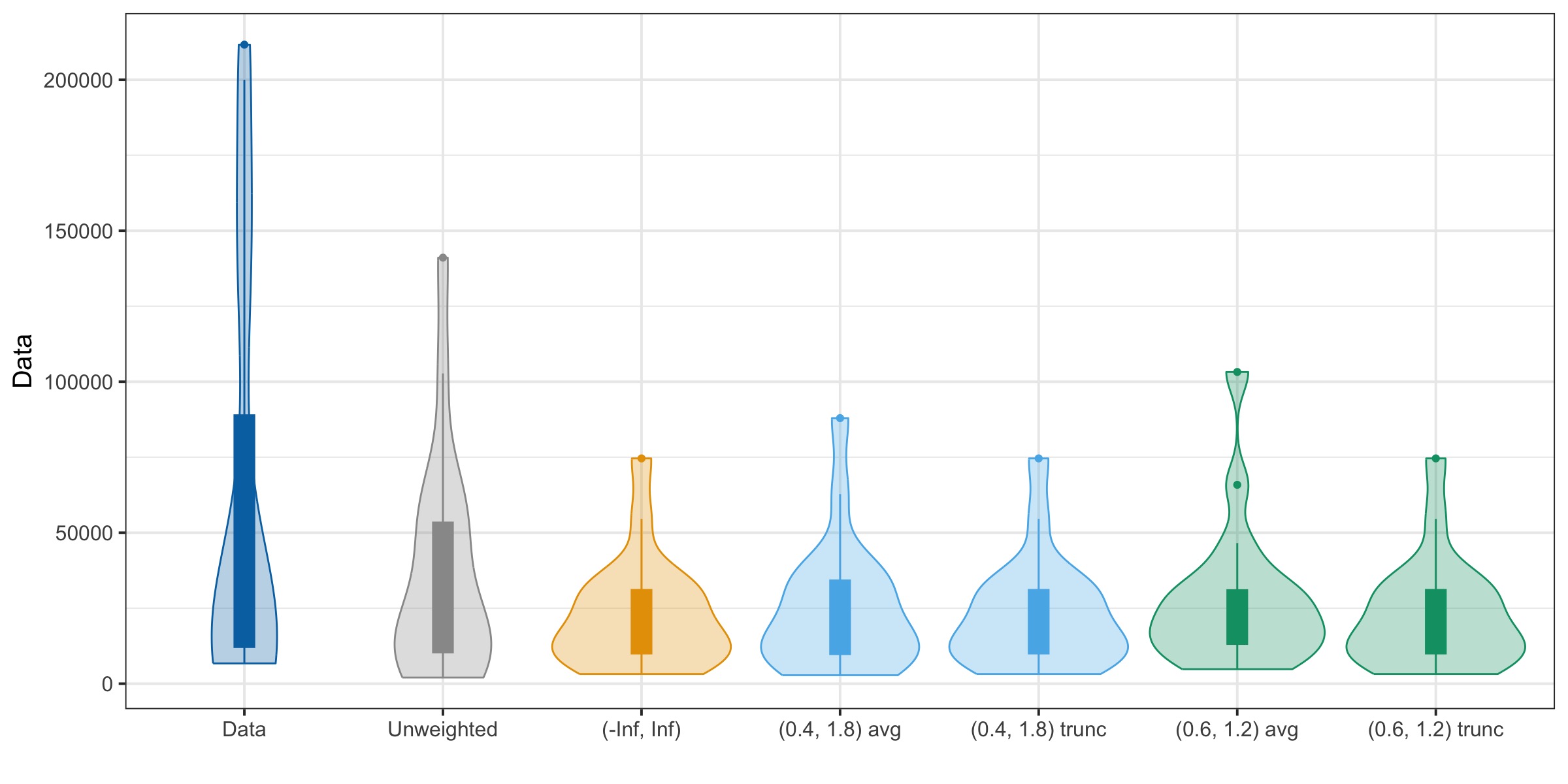}
      \caption{Violin plots of confidential data and synthetic data generated from Unweighted, (-Inf, Inf) (i.e., no bounds as Weighted), $(0.4, 1.8)$ averaged, $(0.4, 1.8)$ truncated, $(0.6, 1.2)$ averaged, and $(0.6, 1.2)$ truncated.}
    \label{fig:app-data}
\end{figure}

Similar results are observed for analysis-specific utility, shown in Table \ref{tab:Stress_app-utility}, that (0.6, 1.2) averaged synthesizer has the highest utility for all three statistics, namely the mean, the median, and the 90th quantile statistic, and (0.4, 1.8) averaged synthesizer comes second. Recall that the Weighted synthesizer produces synthetic datasets identical to that of the (0.4, 1.8) truncated and that of the (0.6, 1.2) truncated synthesizers. These synthetic data have the lowest utility for all three statistics. 

\begin{table}[h!]
\centering
\begin{tabular}{ l | r | r r r r r }
   & Unweighted & (-Inf, Inf) & (0.4, 1.8) avg &(0.4, 1.8) trunc & (0.6, 1.2) avg  & (0.6, 1.2) trunc \\ \hline
 Mean &  36413 & 22743 & {\underline{24762}} & 22743& {\bf{27795}} &  22743 \\
 Median  & 20832 & 15991 & {\underline{16225}} & 15991 & {\bf{19951}} & 15991  \\
 90th Q  & 73446 & 37595  & {\underline{42168}}  & 37595  & {\bf{50424}} &  37595 \\ \hline
\end{tabular}
\caption{Utility results of Unweighted, (-Inf, Inf) (i.e., no bounds as Weighted), $(0.4, 1.8)$ avg, $(0.4, 1.8)$ truncated, $(0.6, 1.2)$ avg, and $(0.6, 1.2)$ truncated, over a singe sample. In each utility row, the best performing synthesizer is in bold and the second best is underlined. The mean, median, and 90th quantile from the confidential data are 57771, 15616, and 165385, respectively.}
\label{tab:Stress_app-utility}
\end{table}

Overall, our application results are consistent with our findings in the simulation study in Section \ref{sec:simulation:story1}, that both the range-averaged and the range-truncated synthesizers provide stronger privacy guarantee than the Weighted synthesizer without considering a sensitive range.  For either, a shorter sensitive range results in a stronger privacy guarantee. Between the two, the range-averaged synthesizer provides an even stronger privacy guarantee by utilizing more information from the selected sensitive range. On the utility side, the range-truncated synthesizers produce synthetic data identical to that of the Weighted synthesizer, hence identical utility profiles. The range-averaged synthesizers produce synthetic data with higher utility, with a shorter sensitive range resulting in even higher utility. Additional results for the performance of the synthesizers, including the values of $\boldsymbol{\lambda}$, the sensitive information probabilities for the range-averaged synthesizers and the privacy weights $\boldsymbol{\alpha}$ of all synthesizers, are available in Appendix \ref{sec:appendixB}.

\section{Concluding remarks}
\label{sec:conclusion}
In this work, we propose a general formulation for model-based privacy protection restricted to a known range of data values. This allows for the use of the risk-weighted pseudo-posterior mechanism along with outside or public knowledge to provide more targeted and efficient formal privacy protection. Our general approach is based on a decomposition of the individual likelihood contributions to risk and results in two novel range-restricted standards: range-averaged and range-truncated.
Each standard utilizes information from a sensitive range of values in different ways. The range-averaged approach leverages more information about the data distribution within the restricted range, resulting in significant improvements in both privacy and utility. The range-truncated approach is more conservative and only conditions on information from the endpoints of the restricted range. This results in more modest gains in privacy over the regular risk-weighted pseudo-posterior. However, both provide a privacy amplification effect leading to stronger, more focused protection for values within the specified range.

Our series of simulation studies in Section \ref{sec:simulation} demonstrate the privacy amplification effects of our two new range-restricted standards. We also show the role that the length of the sensitive range $(a, b)$ plays in tuning the privacy budget and the utility preservation level. In Section \ref{sec:simulation:story1}, both sets of range-restricted synthesizers produce a stronger and amplified privacy guarantee by utilizing  information about sensitive ranges possessed by data disseminators without a compromise of data utility. The length of the sensitive range $(a, b)$ plays an important role when it comes to tuning the amount of amplification. Specifically, shortening the sensitive range enhances the privacy amplification effect and improves utility. Section \ref{sec:simulation:story2} demonstrates the flexibility of our new range-restricted standards through the scaling of privacy weights $\alpha_i$'s, in order to achieve higher data utility for the same privacy budget. This could be particularly appealing to data disseminators if they would like to utilize the sensitive information they possess and have a particular targeted privacy budget in mind. The flexibility of our new range-restricted standards to allow different sensitive ranges for different portions of the confidential data is illustrated in Section \ref{sec:simulation:story3}. Specifically, data disseminators may choose a wider sensitive range for large, outlying records, in order to provide higher privacy protection. Our range-restricted standards can easily incorporate such choices into the synthesizer. The resulting synthetic data would have a larger privacy budget and lower data utility due to reduced privacy amplification through the use of a wider sensitive range. Our application to an accelerated life testing dataset in Section \ref{sec:application} provides another illustration of the privacy and utility profiles of the two newly proposed range-restricted standards. 

We provide a further discussion on the seemingly counterintuitive results of the privacy-utility tradeoff. A widely observed phenomenon in the privacy research literature, especially among additive noise methods, is that a lower privacy budget means adding more noise and hence reducing data utility. \cite{HuSavitskyWilliams2022JSSAM} demonstrate that this is also true in model-based formal privacy. In our results, we do not observe this privacy-utility tradeoff: Section \ref{sec:simulation:story1} shows our range-restricted synthesizers can achieve a lower privacy budget without compromising data utility; Section \ref{sec:simulation:story2} shows we can achieve higher data utility for the same privacy budget through scaling of privacy weights; Section \ref{sec:simulation:story3} shows that we can achieve a higher privacy budget and lower data utility by using a wider sensitive range for large, outlying records. 
While this may seem counterintuitive, we highlight that privacy amplification is the key concept in our new range-restricted standards. Data disseminators possess information about sensitive ranges that they could use for their advantage when it comes to data dissemination. In our range-restricted standards, we design synthesizers that condition on such `public' information, and the enhanced, stronger privacy guarantee (i.e., lower privacy budget) is achieved through the amplification effect of conditioning on this information. Hence, it is possible to reduce privacy budget through the amplification effect without reducing data utility.

\newpage
\bibliographystyle{agsm}
\bibliography{DPbib,A_reference}

\newpage
\appendix
\section{Additional utility results from Section \ref{sec:simulation:story3}}
\label{sec:appendixA}

\begin{figure}[H]
  \centering
   \includegraphics[width=1\textwidth]{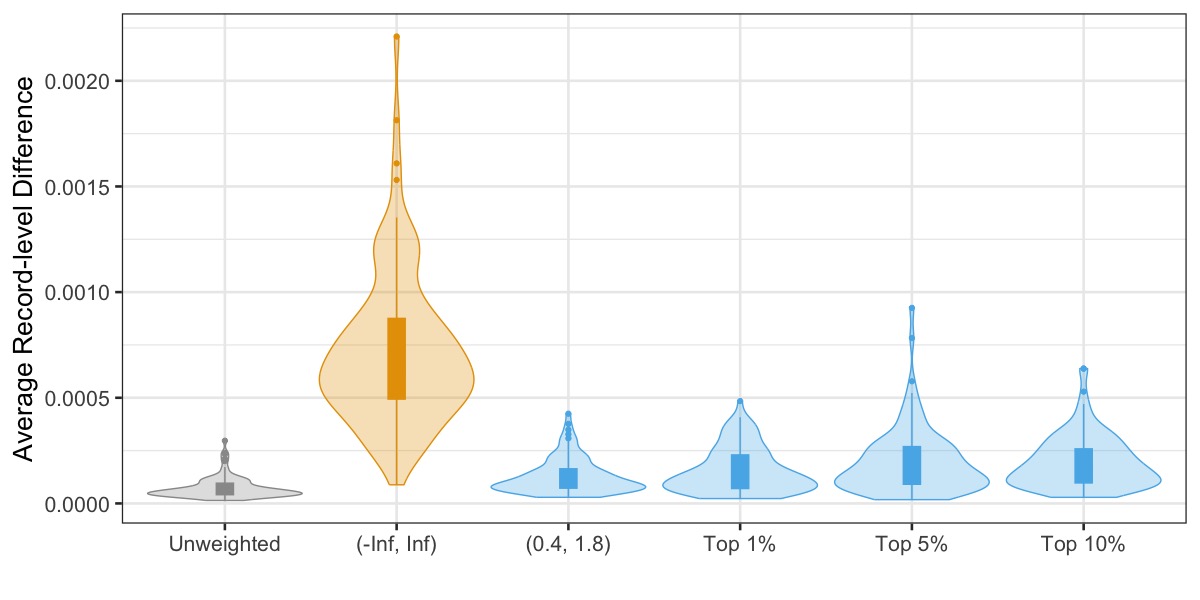}
      \caption{Violin plots of ECDF average record-level differences of synthetic data of Data, Unweighted, (-Inf, Inf) (i.e., no bounds as Weighted), $(0.4, 1.8)$ averaged, $(0.4, 1.8)$ averaged with Top 1\% of distribution  receive wider sensitivity range of $(0.2, 2.4)$ , $(0.4, 1.8)$ averaged with Top 5\% of distribution receive wider sensitivity range of $(0.2, 2.4)$, and $(0.4, 1.8)$ averaged with Top 10\% of distribution receive wider sensitivity range of $(0.2, 2.4)$, over 100 repeated samples.}
      \label{fig:sim-uas-top-comparison}
\end{figure}

\begin{figure}[H]
  \centering
   \includegraphics[width=1\textwidth]{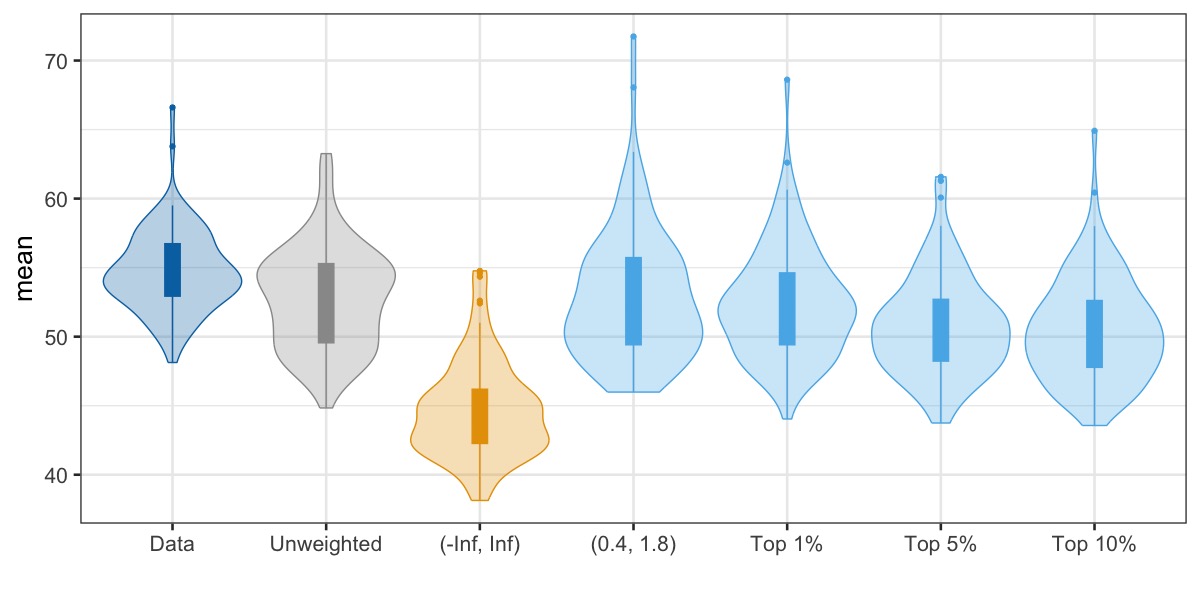}
      \caption{Violin plots of means of synthetic data of Data, Unweighted, (-Inf, Inf) (i.e., no bounds as Weighted), $(0.4, 1.8)$ averaged, $(0.4, 1.8)$ averaged with Top 1\% of distribution  receive wider sensitivity range of $(0.2, 2.4)$ , $(0.4, 1.8)$ averaged with Top 5\% of distribution receive wider sensitivity range of $(0.2, 2.4)$, and $(0.4, 1.8)$ averaged with Top 10\% of distribution receive wider sensitivity range of $(0.2, 2.4)$, over 100 repeated samples.}
      \label{fig:sim-means-top-comparison}
\end{figure}

\section{Additional plots from Section \ref{sec:application}}
\label{sec:appendixB}

\begin{figure}[H]
  \centering
   \includegraphics[width=1\textwidth]{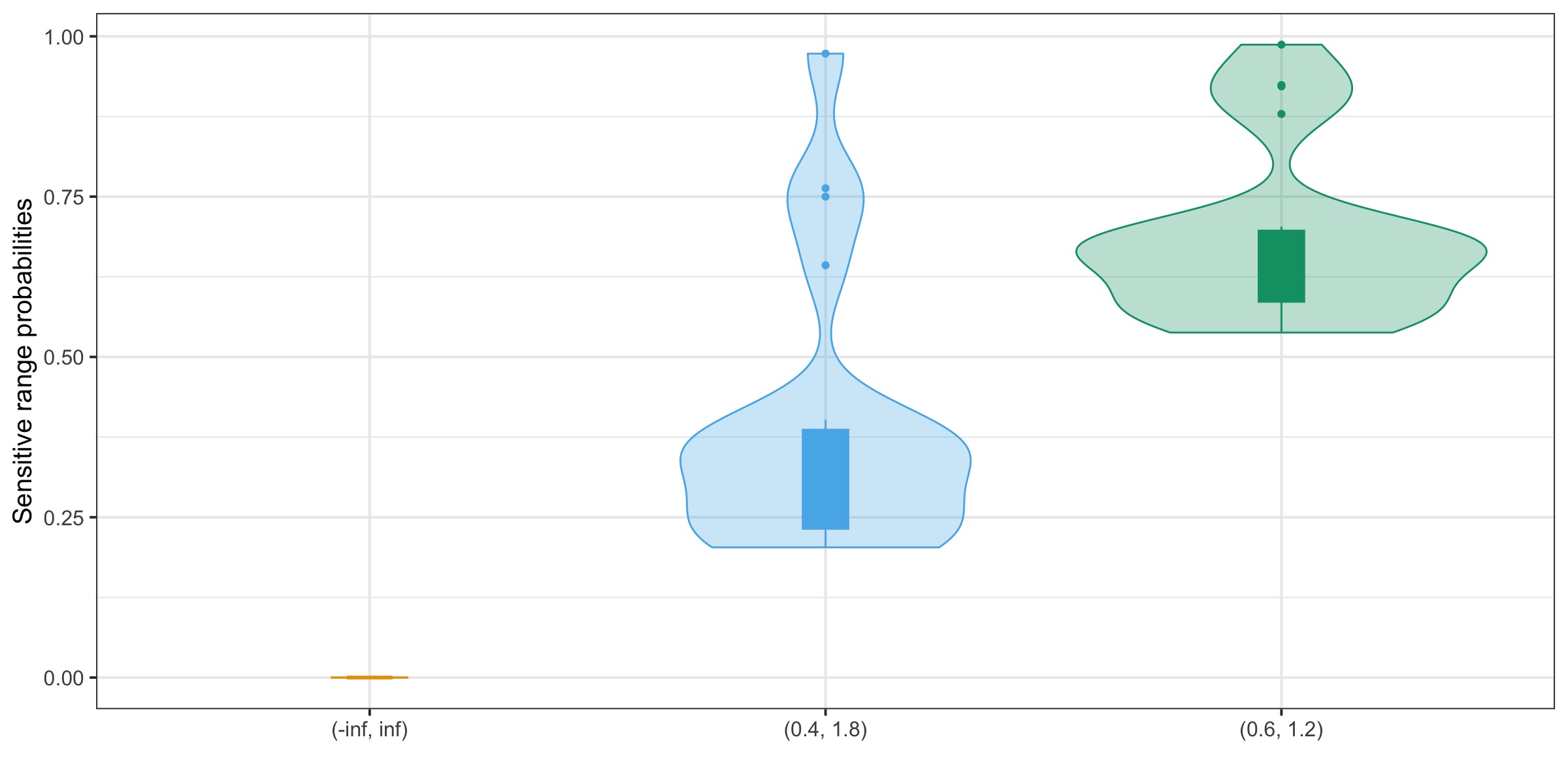}
      \caption{Violin plots of sensitive range probabilities (i.e., $\lambda$) of (-Inf, Inf) (i.e., no bounds as Weighted), $(0.4, 1.8)$ averaged, and $(0.6, 1.2)$ averaged.}
      \label{fig:app-lambdas}
\end{figure}

\begin{figure}[H]
  \centering
   \includegraphics[width=1\textwidth]{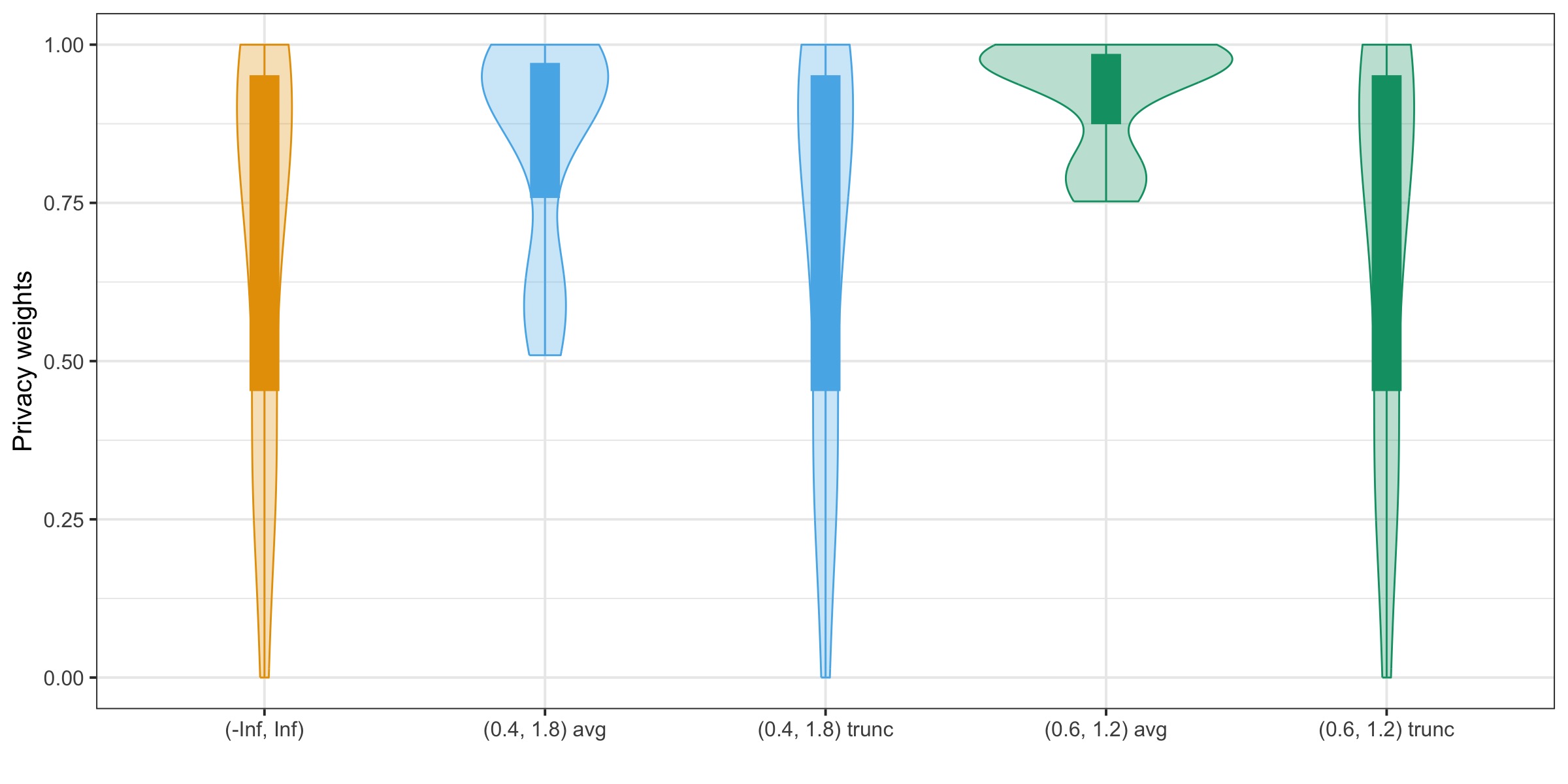}
      \caption{Violin plots of privacy weights of (-Inf, Inf) (i.e., no bounds as Weighted), $(0.4, 1.8)$ averaged, $(0.4, 1.8)$ truncated, $(0.6, 1.2)$ averaged, and $(0.6, 1.2)$ truncated.}
      \label{fig:app-alphas}
\end{figure}

\end{document}